%% file: main.tex
\documentclass[acmsmall,nonacm]{acmart}

\usepackage{threeparttable}
\usepackage{multicol,multirow}
\usepackage{makecell}
\usepackage{wrapfig}
\usepackage{booktabs}
\usepackage{tabularx}
\usepackage{bm}
\usepackage{setspace}
\usepackage{algorithm}
\usepackage{tikz}
\usepackage{algpseudocode}
\usepackage{algorithmicx}
\newtheorem{theorem}{Theorem}
\usepackage{hyperref}
\usepackage{cleveref}
\usepackage{wasysym}

\newcommand*\circledblack[1]{\tikz[baseline=(char.base)]{\node[shape=circle,fill=black,text=white,inner sep=0.6pt] (char) {#1};}}

\usepackage{pifont}
\usepackage{xcolor}
\definecolor{myyellow}{HTML}{FFC519}
\definecolor{myblue}{HTML}{79BEEA}
\definecolor{mygreen}{RGB}{0, 128, 0}
\definecolor{myred}{RGB}{192, 0, 0}

\usepackage{draftwatermark}
\SetWatermarkText{Accepted by PLDI 2026}
\SetWatermarkScale{0.45} 
\SetWatermarkColor[gray]{0.90} 

\AtBeginDocument{%
  }

\setcopyright{acmlicensed}
\copyrightyear{2018}
\acmYear{2018}
\acmDOI{XXXXXXX.XXXXXXX}
\acmConference[Conference acronym 'XX]{Make sure to enter the correct
  conference title from your rights confirmation email}{June 03--05,
  2018}{Woodstock, NY}
\acmISBN{978-1-4503-XXXX-X/2018/06}

\usepackage[colorinlistoftodos,prependcaption]{todonotes}

\begin{document}

\title{NEURA: A Unified and Retargetable Compilation Framework for Coarse-Grained Reconfigurable Architectures}

\author{Shangkun Li}
\orcid{0009-0000-7623-4178}
\email{shangkun.li@connect.ust.hk}
\affiliation{%
  \institution{The Hong Kong University of Science and Technology}
  \country{Hong Kong}
}

\author{Jinming Ge}
\orcid{0009-0008-7986-4886}
\email{jgeab@connect.ust.hk}
\affiliation{%
  \institution{The Hong Kong University of Science and Technology}
  \country{Hong Kong}
}

\author{Diyuan Tao}
\orcid{0009-0008-2696-6651}
\email{1173710114az@gmail.com}
\affiliation{%
  \institution{Independent Researcher}
  \country{China}}

\author{Zeyu Li}
\orcid{0009-0003-5104-7466}
\email{zliki@connect.ust.hk}
\affiliation{%
  \institution{The Hong Kong University of Science and Technology}
  \country{Hong Kong}
}

\author{Jiawei Liang}
\email{jliangbr@connect.ust.hk}
\affiliation{%
  \institution{The Hong Kong University of Science and Technology}
  \country{Hong Kong}
}

\author{Linfeng Du}
\orcid{0000-0002-3007-4890}
\email{linfeng.du@connect.ust.hk}
\affiliation{%
  \institution{The Hong Kong University of Science and Technology}
  \country{Hong Kong}
}

\author{Jiang Xu}
\orcid{0000-0001-9089-7752}
\email{jiang.xu@hkust-gz.edu.cn}
\affiliation{%
  \institution{The Hong Kong University of Science and Technology (Guangzhou)}
  \country{China}
}

\author{Wei Zhang}
\authornote{Co-Corresponding Authors}
\orcid{0000-0002-7622-6714}
\email{wei.zhang@ust.hk}
\affiliation{%
  \institution{The Hong Kong University of Science and Technology}
  \country{Hong Kong}
}

\author{Cheng Tan}
\orcid{0000-0003-3727-2889}
\email{chengtan@google.com}
\authornotemark[1]
\affiliation{%
  \institution{Google and Arizona State University}
  \country{USA}
}


\begin{abstract}
Coarse-Grained Reconfigurable Architectures (CGRAs) are a promising and versatile accelerator platform, offering a balance between the performance and efficiency of specialized accelerators and the software programmability. However, their full potential is severely hindered by control flow in accelerated kernels, as the control flow (e.g., loops, branches) is fundamentally incompatible with the parallel, data-driven CGRA fabric. Prior strategies to resolve this mismatch in CGRA kernel acceleration are either inefficient, sacrificing performance for generality, or lack generality due to the difficulty of adapting them across different execution models. Thus, a general and unified solution for efficient CGRA kernel acceleration remains elusive.

This paper introduces NEURA, a unified and retargetable compilation framework that systematically resolves the control-dataflow mismatch in CGRAs. NEURA's core innovation is a novel, pure dataflow intermediate representation (IR) built on a predicated type system. In this IR, control contexts are embedded as a predicate within each data, making control an intrinsic property of data. This mechanism enables NEURA to systematically flatten complex control flow into a single unified dataflow graph. This unified representation decouples kernel representation from hardware, empowering NEURA to retarget diverse CGRAs with different execution models and microarchitectural features. When targeted to a high-performance \textbf{\textit{spatio-temporal}} CGRA, NEURA delivers a $2.20\times$ speedup on kernel benchmarks and up to $2.71\times$ geometric mean speedup on real-world applications over state-of-the-art (SOTA) high-performance baselines. It also provides a competitive solution against the SOTA low-power CGRA when retargeted to a \textbf{\textit{spatial-only}} CGRA. \textbf{NEURA is open-source and available at \texttt{\href{https://github.com/coredac/neura}{https://github.com/coredac/neura}}.}
\end{abstract}

\begin{CCSXML}
<ccs2012>
<concept>
<concept_id>10010520.10010521.10010542.10010545</concept_id>
<concept_desc>Computer systems organization~Data flow architectures</concept_desc>
<concept_significance>500</concept_significance>
</concept>
<concept>
<concept_id>10010520.10010521.10010542.10010543</concept_id>
<concept_desc>Computer systems organization~Reconfigurable computing</concept_desc>
<concept_significance>500</concept_significance>
</concept>
<concept>
<concept_id>10011007.10011006.10011041</concept_id>
<concept_desc>Software and its engineering~Compilers</concept_desc>
<concept_significance>500</concept_significance>
</concept>
</ccs2012>
\end{CCSXML}

\ccsdesc[500]{Computer systems organization~Data flow architectures}
\ccsdesc[500]{Computer systems organization~Reconfigurable computing}
\ccsdesc[500]{Software and its engineering~Compilers}

\keywords{Dataflow Computing, Coarse-Grained Reconfigurable Architecture (CGRA), Dataflow Compiler}


\maketitle

\input{Ch1-Introduction}

\input{Ch2-Background}

\input{Ch3-Overview}

\input{Ch4-IR}

\input{Ch5-Transform}

\input{Ch6-Optimization}

\input{Ch7-Implementation}

\input{Ch8-Evaluation}

\input{Ch9-Related}

\section{Conclusion}\label{sec10:conclusion}
This paper presents NEURA, a unified and retargetable compilation framework for CGRAs that systematically resolves the \textbf{\textit{control-dataflow semantic gap}}. NEURA's core innovation is a novel pure dataflow IR that uses a predicated type system to embed hierarchical control context into data values. This enables flattening a kernel with complex control flow into a single, unified DFG, eliminating the serialization bottlenecks and rigid mapping limitations of prior approaches. Our evaluation validates NEURA's effectiveness and retargetability with negligible hardware overhead. When targeted to a high-performance \textbf{\textit{spatio-temporal}} architecture, NEURA achieves a $2.20\times$ geomean speedup on kernels and up to $2.71\times$ geomean speedup on real-world applications over leading baselines. When retargeted to a low-power \textbf{\textit{spatial-only}} architecture, it provides a competitive solution against the SOTA low-power framework.

\bibliographystyle{ACM-Reference-Format}
\bibliography{ref}










\end{document}

%% file: Ch1-Introduction.tex
\section{Introduction}\label{sec1:introduction}

The relentless pursuit of performance in the post-Moore era \cite{dark-silicon-ISCA2011, dark-silicon-server-Micro2011, DSA-accelerators-CommunACM2020} has spurred specialized hardware accelerators. However, these accelerators are often constrained by lengthy design cycles and rigid architectures \cite{HiPER-ISCA2025, Dadu-RBD-MICRO2023, dadiannao-MICRO2014, Cambricon-X-MICRO2016}, struggling to adapt to rapid algorithmic evolution. Furthermore, edge devices cannot deploy multiple specialized accelerators due to power and area constraints \cite{AcceleratorWall-HPCA2019}. These limitations have spurred significant interest in Coarse-Grained Reconfigurable Architectures (CGRAs), which offer a balanced solution between the performance and efficiency of specialized accelerators and the flexibility for diverse computational demands.

A CGRA is a spatial array of interconnected compute tiles \cite{REVEL-HPCA2020, Amber-JSSC2024, ICED-MICRO2024, NUPEA-ISCA2025, Ripple-PLDI2025, RipTide-MICRO2022, Plasticine-ISCA2017, UE-CGRA-HPCA2021}, communicating via a Network-on-Chip (NoC), as shown in Fig. \ref{fig:rda}(a). Each tile contains several function units (FUs) for computation, a crossbar for intra- or inter-tile routing, register files for temporary data storage, and a control memory, as shown in Fig. \ref{fig:rda}(b). By loading configurations into control memory, the behavior of the FUs and the connections established by the crossbar can be reconfigured cycle-by-cycle. A CGRA compiler typically transforms a computational kernel into a Control-Data Flow Graph (CDFG) representation \cite{Hyperblock-MICRO1992, DFG-CACM1976}, as this naturally captures both computation and control. The CDFG consists of a Control Flow Graph (CFG) and Data Flow Graphs (DFGs). A DFG depicts operations as nodes and data dependencies as edges, as shown in Fig. \ref{fig:execution-model}(b). Each DFG corresponds to the computation within a basic block (BB). The CFG is a graph whose nodes are BBs, and its edges represent the control flows between BBs, as shown in Fig. \ref{fig:execution-model}(a). The compiler maps the DFG nodes onto tiles' FUs and DFG edges onto the routing resources (e.g., crossbar, registers). This allows the CGRA to exploit instruction-level parallelism (ILP) within a single DFG.

\begin{wrapfigure}{r}{0.45\textwidth}
  \vspace*{-0.8\baselineskip}
  \centering
  \includegraphics[width=0.45\textwidth]{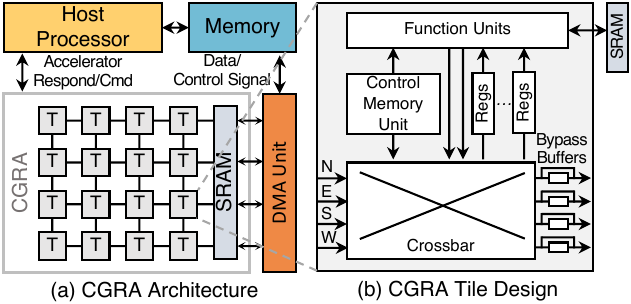}
  \vspace*{-1.2\baselineskip}
  \caption{CGRA Architecture --- (a) The CGRA is invoked via the accelerator command interface by the host processor. Control signals generated by the host processor and data required for computation are loaded through the Direct Memory Access (DMA) Unit from shared memory into the CGRA's SRAM and each tile's control memory unit. (b) CGRA tile components.}
  \label{fig:rda}
  \vspace*{-0.8\baselineskip}
\end{wrapfigure}

However, while the DFG maps naturally onto the CGRA, the CFG presents a fundamental challenge to kernel acceleration. The CFG is the sequential, control-driven logic (e.g., branches, loops) that dictates transitions between different DFGs. A CGRA, in contrast, is configured to execute a specific DFG in a parallel, data-driven manner. It lacks a native mechanism to directly interpret or execute these CFG constructs. This creates a significant \textbf{\textit{control-dataflow semantic gap}} when accelerating a kernel on CGRA: \textbf{\textit{a fundamental mismatch between the sequential control logic of the CFG and the parallel dataflow execution of CGRAs.}} This gap severely constrains CGRA's ability to exploit ILP across DFGs, becoming a primary obstacle to achieving the full performance potential of CGRAs.

Prior attempts to bridge this \textbf{\textit{control-dataflow semantic gap}} have followed several strategies, but our observations indicate that they all suffer from critical pitfalls: (1) the common approach of managing the CFG externally using host CPUs or dedicated hardware controllers inherently serializes kernel execution at the BB \cite{Marionette-MICRO2023, Plasticine-ISCA2017} or larger fused-block level \cite{SARA-ISCA2021, TRIPS-2004TACO}, creating performance bottlenecks due to communication and reconfiguration overheads, (2) the strategy of flattening the entire CDFG into a single DFG using steering control \cite{Ripple-PLDI2025, RipTide-MICRO2022, NUPEA-ISCA2025} is widely adopted for \textbf{\textit{spatial-only}} execution (see Sec. \ref{subsec2.1:exemodel}) but difficult to adapt to \textbf{\textit{spatio-temporal}} execution, thereby limiting its architectural generality, and (3) existing predication techniques \cite{IfConversion-1983POPL, Hyperblock-MICRO1992, LLVMIfConversion-2026online}, applied in some CGRA compilers \cite{OpenCGRA-ICCD2020, Morpher-2022WOSET, BranchAware-2014DAC}, are effective for simple branches (e.g., if-conversions) or a single loop but fail to represent hierarchical control dependencies (e.g., nested loop controls), thus still relying on serialized execution for the remaining control structures \cite{SARA-ISCA2021, TRIPS-2004TACO}. The pitfalls inherent in these strategies reveal that existing approaches fail to provide a unified, general, and retargetable approach capable of efficiently managing control flows across diverse CGRA architectures.

To address the pitfalls of prior works, we propose \textbf{NEURA} --- a novel compilation framework centered on \textbf{\underline{N}}atively \textbf{\underline{E}}xecuting a \textbf{\underline{U}}nified Intermediate \textbf{\underline{R}}epresentation on CGR\textbf{\underline{A}}s. NEURA's core idea is a new intermediate representation (IR) designed specifically for CGRA dataflow execution. This IR is designed to (1) \textbf{express control flows of a kernel in a unified dataflow manner for high ILP}, (2) \textbf{serve as an extensible, retargetable foundation for diverse CGRAs by decoupling from specific execution models and hardware primitives}, and (3) \textbf{provide native support for hierarchical control dependencies}. By seamlessly integrating control flows into data dependencies, this unified representation eliminates the \textbf{\textit{control-dataflow semantic gap}}, enabling efficient, holistic kernel execution on CGRAs. The specific contributions are as follows:

\begin{itemize}
    \item \textbf{A Unified and Extensible Dataflow IR:} We propose the \textbf{NEURA Dataflow IR}, a pure dataflow representation built on a novel predicated type system and a general operation set. It holistically captures complex, hierarchical control structures in kernels. Its extensible design allows the IR to model diverse microarchitectural features and execution models;
    \item \textbf{A Systematic Methodology for Flattening Control Flow:} We introduce a systematic and provably complete methodology that automatically transforms kernels from a conventional CDFG representation into the pure NEURA Dataflow IR;
    \item \textbf{A Versatile and Retargetable Compilation Framework:} We propose an end-to-end CGRA compilation framework, integrating frontends, transformations, hardware-agnostic/-specific optimizations, mapping, an IR interpreter, and a cycle-accurate simulator. Leveraging the IR's extensibility, it provides a retargetable solution for diverse CGRA architectures;
    \item \textbf{A Comprehensive Evaluation on NEURA:} We conduct extensive evaluations validating NEURA as a unified, retargetable framework for both high-performance and low-power domains across diverse benchmarks. When targeting high-performance \textbf{\textit{spatio-temporal}} architecture, NEURA achieves state-of-the-art (SOTA) performance, delivering an average speedup of $2.20\times$ on kernel benchmarks and up to $2.71\times$ geometric mean speedup on real-world applications over leading baselines. NEURA also proves its generality by offering a competitive solution when retargeted to a low-power \textbf{\textit{spatial-only}} architecture in a direct comparison against the SOTA low-power framework.
\end{itemize}

%% file: Ch2-Background.tex
\section{Background and Motivation}\label{sec2:background}

This section begins by introducing the two primary execution models of CGRAs and establishing the trade-offs that motivate our design to support both execution models in Sec. \ref{subsec2.1:exemodel}. We then use a motivating example to deconstruct the three prevailing strategies for handling control flow in Sec. \ref{subsec2.2:controlflow}, revealing the pitfalls that necessitate the design of NEURA.

\subsection{Execution Models of CGRAs}\label{subsec2.1:exemodel}
To understand the compilation challenges for CGRAs, we first consider their execution models. Using the synthetic kernel in Fig. \ref{fig:execution-model}(a), we illustrate the execution of a simple loop body's (ignoring loop control for simplicity) DFG. CGRAs typically employ modulo scheduling \cite{ModuloScheduling-MICRO1994} to pipeline loop iterations, initiating a new iteration every Initiation Interval (II) cycles to maximize ILP. How the DFG is mapped to the hardware defines two distinct execution models with different trade-offs.

\textit{\textbf{Spatial-Only}} execution model statically maps a DFG or a partitioned subgraph onto the CGRA's tile array, creating a one-to-one mapping between DFG nodes and physical tiles, as shown in the left part of Fig. \ref{fig:execution-model}(c). Once configured, each tile's function and routing are immutable for the duration of the mapped graph's execution. This static, fully-spatial approach eliminates the cycle-by-cycle runtime reconfiguration overhead, making it exceptionally energy-efficient. However, this static nature introduces inflexibility when executing a DFG larger than the tile array. The compiler must partition the large DFG into multiple subgraphs and sequence their execution. This introduces compiler challenges for optimal DFG partitioning and incurs context-switching overheads when reconfiguring the array between subgraphs. Due to these trade-offs, this execution model is an ideal choice for \textbf{low-power domains} \cite{RipTide-MICRO2022, SNAFU-ISCA2021}.

\textit{\textbf{Spatio-Temporal}} execution model combines cycle-by-cycle resource-multiplexing with spatial mapping, as shown in the right part of Fig. \ref{fig:execution-model}(c). It allows each tile to be reconfigured every cycle to execute different operations \cite{OpenCGRA-ICCD2020, CGRA-ME-2017ASAP, DRESC-FPT2002}. This dynamic reconfigurability provides high flexibility, enabling even a compact tile array to execute large DFGs by dynamically reusing resources across cycles. This leads to higher hardware utilization and generality across diverse workloads. However, this execution model relies on complex \textbf{\textit{spatio-temporal}} scheduling. Furthermore, cycle-by-cycle reconfiguration incurs high energy and area overheads. Due to these trade-offs, this execution model is typically the preferred choice for \textbf{general-purpose and high-performance CGRAs}.

\begin{wrapfigure}{r}{0.63\textwidth}
  \vspace*{-0.6\baselineskip}
  \centering
  \includegraphics[width=0.63\textwidth]{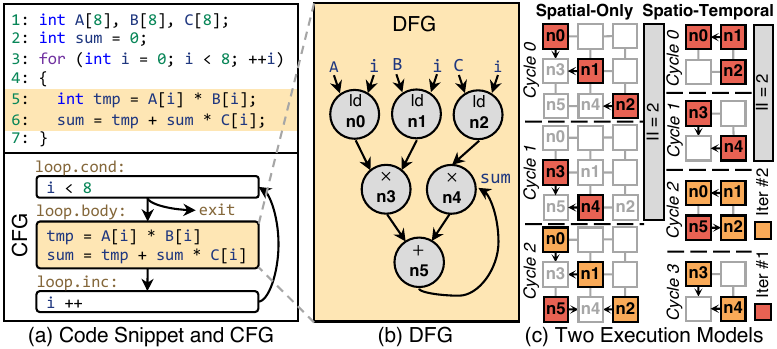}
  \vspace*{-1.2\baselineskip}
  \caption{Kernel Representation and Execution Models --- (a) A synthetic kernel and the CFG of this kernel. We show the source code instead of the assembly code in the CFG for simplicity. (b) The DFG of the loop body. (c) The 6-node DFG requires a larger $3\times3$ array to execute in a spatial-only CGRA. The spatio-temporal CGRA can execute the same DFG on a smaller $2\times2$ array by time-multiplexing the tiles.}
  \label{fig:execution-model}
  \vspace*{-0.6\baselineskip}
\end{wrapfigure}
The salient characteristics and trade-offs of these two execution models are summarized in Table \ref{table:exec-model}. As discussed, both models possess unique strengths suited for distinct requirements. Furthermore, both remain prevalent and vital to the CGRA ecosystem, as evidenced by the continuous emergence of recent \textbf{\textit{spatial-only}} \cite{Ripple-PLDI2025,RipTide-MICRO2022,NUPEA-ISCA2025,Pipestitch-MICRO2023} and \textbf{\textit{spatio-temporal}} \cite{PICACHU-ASPLOS2025, FexMo-MICRO2025, Plaid-ASPLOS2025, ICED-MICRO2024} architectures. However, existing compiler frameworks often rely on underlying abstractions \cite{Ripple-PLDI2025, NUPEA-ISCA2025, ICED-MICRO2024, SARA-ISCA2021, Marionette-MICRO2023} that are difficult to generalize across both execution models. They lack a versatile dataflow abstraction that can decouple computational and control logic from the specific execution model. Since both models are indispensable, NEURA's objective is to provide a unified, retargetable compilation framework that seamlessly supports both.

\begin{table}[t]
\centering
\caption{Execution Models of CGRAs.}
\label{table:exec-model}
\vspace*{-0.7\baselineskip}
\resizebox{\linewidth}{!}{ 
\begin{tabular}{lccccc}
\toprule

\textbf{\multirow{2}{*}{\makecell[c]{CGRA\\Execution Model}}} & \textbf{\multirow{2}{*}{\makecell[c]{Generality}}} & \textbf{\multirow{2}{*}{\makecell[c]{Performance}}} & \textbf{\multirow{2}{*}{\makecell[c]{Energy\\Efficiency}}} & \textbf{\multirow{2}{*}{\makecell[c]{Hardware\\Utilization}}} & \textbf{\multirow{2}{*}{\makecell[c]{Representative\\Works}}}\\\\
\toprule

\multirow{2}{*}{\textbf{\makecell[c]{Spatial-Only}}} & \multirow{2}{*}{\makecell[c]{Low}} & \multirow{2}{*}{\makecell[c]{Medium$^\mathrm{\ast}$}} & \multirow{2}{*}{\makecell[c]{High}} & \multirow{2}{*}{\makecell[c]{Medium$^\mathrm{\ast}$}} & \multirow{2}{*}{\makecell[c]{SNAFU \cite{SNAFU-ISCA2021}, RipTide \cite{RipTide-MICRO2022}, Marionette \cite{Marionette-MICRO2023}, \\ NUPEA \cite{NUPEA-ISCA2025}, Plasticine \cite{Plasticine-ISCA2017}, DySER \cite{DySER-IEEEMicro2012}}}  \\ \\
\hline
\multirow{2}{*}{\textbf{\makecell[c]{Spatio-Temporal}}} & \multirow{2}{*}{\makecell[c]{High}} & \multirow{2}{*}{\makecell[c]{High}} & \multirow{2}{*}{\makecell[c]{Low}} & \multirow{2}{*}{\makecell[c]{High}} & \multirow{2}{*}{\makecell[c]{ICED \cite{ICED-MICRO2024}, Plaid \cite{Plaid-ASPLOS2025}, DRIPS \cite{DRIPS-HPCA2022}, \\ HyCUBE \cite{HyCUBE-DAC2017}, ADRES \cite{ADRES-FPL2003}}} \\ \\

\bottomrule

\multicolumn{6}{l}{$\mathrm{\ast}$ Performance and hardware utilization degrade when executing irregular dataflow patterns.}
\end{tabular}
}
\vspace*{-0.8\baselineskip}
\end{table}

\subsection{Pitfalls of Existing Strategies to Bridge the Control-Dataflow Semantic Gap}\label{subsec2.2:controlflow}

As introduced in Sec. \ref{sec1:introduction}, accelerating kernels on CGRAs involves bridging the \textbf{\textit{control-dataflow semantic gap}} between a kernel's CDFG representation and the hardware's dataflow fabric. Three primary strategies have emerged to tackle this challenge, but each introduces its own pitfall, ultimately failing to unlock the full potential of CGRAs. The motivating example in Fig. \ref{fig:motivation}(a) --- a kernel with a three-level control structure: two-level imperfect nested loops and an inner \texttt{if-else} branch --- exposes the fundamental limitations of existing strategies.

\begin{figure}[t]
    \centering
    \includegraphics[width=1.0\textwidth]{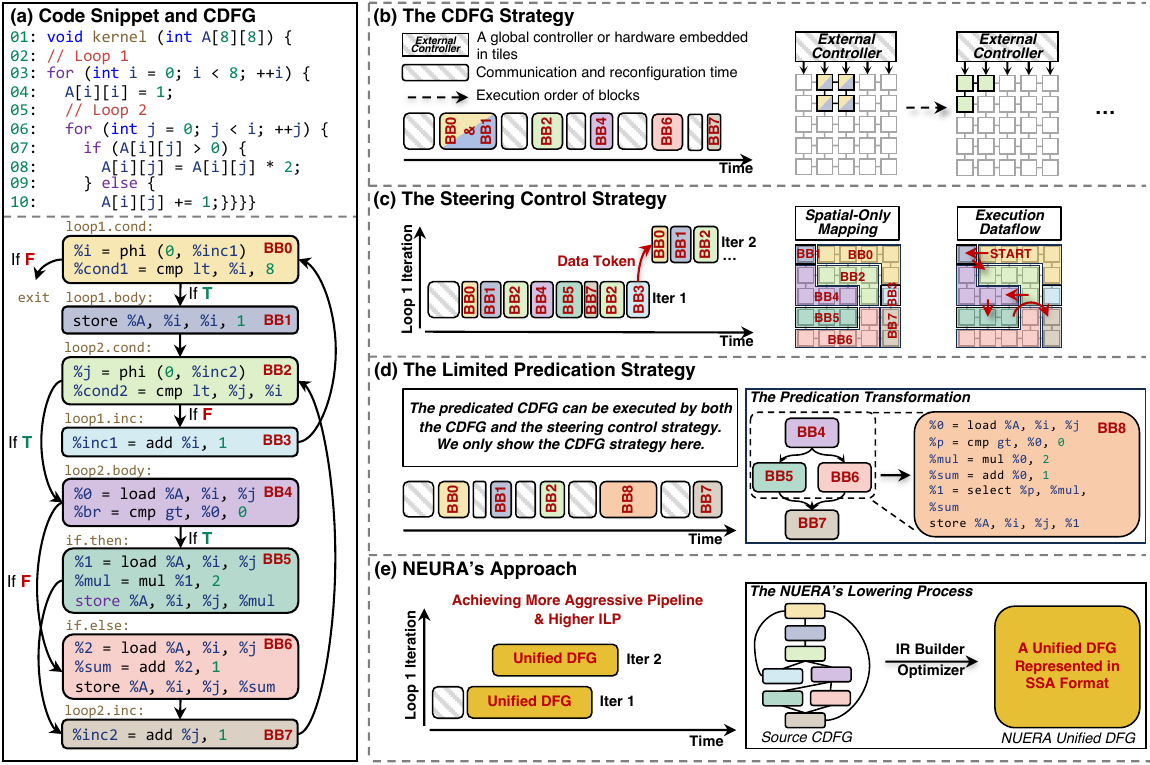}
    \vspace*{-1.0\baselineskip}
    \caption{Motivating Example --- (a) A synthetic kernel with imperfect nested loop and branch divergence and its CDFG. (b) The CDFG strategy serializes the execution of each BB via an external controller. (c) The steering control strategy flattens the control flow in the CFG for spatial-only execution. (d) Limited predication transforms the branch divergence into a single BB but fails to resolve the nested control flow with loops. (e) NEURA represents the kernel as a unified DFG, exploiting both intra- and inter-BB parallelism.
    }
    \label{fig:motivation}
    \vspace*{-1.0\baselineskip}
\end{figure}

\subsubsection{The CDFG Strategy: Serialization via External Controller} This strategy offloads CFG management to external controllers, such as a host CPU or dedicated control hardware \cite{Marionette-MICRO2023, SARA-ISCA2021, Plasticine-ISCA2017, DySER-IEEEMicro2012, TRIPS-2004TACO}. As shown in Fig. \ref {fig:motivation}(b), the controller sequentially dispatches each BB or larger fused-block to the CGRA. For each BB or fused-block, the controller first configures the CGRA fabric for the corresponding DFG, triggers its execution, and then waits for completion before proceeding to the next (fused-)block. This stall introduces a lock-step interaction, creating execution bubbles due to communication and CGRA reconfiguration. This results in \textit{\textbf{Pitfall I: Serializing execution at the BB or fused-block level} severely undermines inter-block parallelism.}

\subsubsection{The Steering Control Strategy: Flattening the CDFG via Steering Control} This strategy flattens the CDFG into a single DFG by converting control dependencies into data dependencies using steering control ($\phi^{-1}$) \cite{SteerControl-PL1991, DataflowProcessor-ISCA1974, Ripple-PLDI2025, RipTide-MICRO2022}. At runtime, data values are steered through pre-configured paths based on condition results, as shown in Fig. \ref{fig:motivation}(c). While this creates a unified DFG, it is predominantly adopted in the \textbf{\textit{spatial-only}} execution model. Because \textbf{\textit{spatio-temporal}} execution requires the compiler to schedule operations across both spatial and temporal dimensions, it is difficult to determine the execution timing and placement for operations that depend on dynamically steered data tokens, since these tokens may experience variable latencies when traversing different execution paths. Consequently, this strategy is hard to adapt to \textbf{\textit{spatio-temporal}} execution models. Furthermore, statically encoding complex control paths as physical routes often results in long, circuitous routing paths that drastically degrade performance. These drawbacks constitute \textit{\textbf{Pitfall II: Steering control is difficult to adapt to the \textbf{\textit{spatio-temporal}} execution model, sacrificing performance and architectural generality.}}

\subsubsection{The Limited Predication Strategy: Failure in Hierarchical Predicates} Conventional predication converts control dependencies into data dependencies \cite{IfConversion-1983POPL, Hyperblock-MICRO1992, LLVMIfConversion-2026online}, enabling concurrent branch execution across both \textbf{\textit{spatial-only}} and \textbf{\textit{spatio-temporal}} execution models. However, existing predicate-based CGRA compilers \cite{BranchAware-2014DAC, OpenCGRA-ICCD2020, ICED-MICRO2024, Plaid-ASPLOS2025} are limited to simple if-conversions or single loops, failing to handle hierarchical predicates in nested control (e.g., nested loops, branches in loops). As shown in Fig. \ref{fig:motivation}(a), an instruction in BB5 executes only if hierarchical predicates are met -- both outer (Loop 1) and inner (Loop 2) loops are active, and the branch condition is true. These predicates are computed in different BBs, and conventional CDFG-based IRs (e.g., LLVM \cite{LLVM-CGO2004} IR) lack the semantics to propagate the loop predicates (e.g., \texttt{\%cond1} and \texttt{\%cond2} from BB0 and BB2) into an inner BB (e.g., BB5) and merge them with the block's internal logic. Consequently, as shown in Fig. \ref{fig:motivation}(d), the compiler can only apply predication locally to branch divergences (e.g., BB5 and BB6), leaving the surrounding loop structure stranded in CFG. To manage these stranded structures, existing works (e.g., TRIPS \cite{TRIPS-2004TACO}) are forced to adopt a hybrid execution: utilizing predication for if-conversions and relying on the CDFG strategy of \textbf{\textit{Pitfall I}} for loop structures. However, this hybrid execution inherently restricts the ability to fully exploit inter-block parallelism (e.g., loop iteration pipelining). This representational flaw constitutes \textit{\textbf{Pitfall III: Failing to unify hierarchical predicates forces a retreat to the serialized CDFG execution strategy.}}

A comparison between NEURA and other CGRA frameworks is listed in Table \ref{table:comparison}. This table highlights NEURA's unique capabilities, which are enabled by its novel dataflow IR. As shown in Fig. \ref{fig:motivation}(e), NEURA's IR flattens the entire kernel into a single, unified DFG. This holistic representation eliminates the serialization bubbles and reconfiguration overhead of the CDFG and limited predication strategies, allowing NEURA to fully exploit both intra- and inter-BB ILP. Furthermore, this unified IR is decoupled from any single execution model, thus avoiding the rigid mapping and scalability limitations of the steering control strategy.

%% file: Ch3-Overview.tex
\section{NEURA Overview}\label{sec3:overview}

\begin{table}[t]
\centering
\caption{Comparison between NEURA and Existing CGRA Frameworks (\CIRCLE -- Fully Supported, \LEFTcircle -- Partially Supported, \Circle -- Not Supported)
}
\label{table:comparison}
\vspace*{-0.7\baselineskip}
\resizebox{\linewidth}{!}{ 
\begin{tabular}{lcccccccc}
\toprule

\textbf{\multirow{2}{*}{\makecell[c]{Framework}}} & \textbf{\multirow{2}{*}{\makecell[c]{Input\\Language}}} & \textbf{\multirow{2}{*}{\makecell[c]{Control Flow\\Strategy}}} & \textbf{\multirow{2}{*}{\makecell[c]{Multi-\\-Frontend}}} & \textbf{\multirow{2}{*}{\makecell[c]{Microarchitectural\\Adaptability$^\mathrm{\ast}$}}} & \textbf{\multirow{2}{*}{\makecell[c]{Holistic Dataflow\\Representation$^\mathrm{\dagger}$}}} & \textbf{\multirow{2}{*}{\makecell[c]{Execution Model\\Versatility$^\mathrm{\ddagger}$}}} & \multicolumn{2}{c}{\textbf{Optimization}}\\
\cmidrule[\heavyrulewidth](lr){8-9}
 & & & & & & & \textbf{HW-Agnostic} & \textbf{HW-Specific}
\\
\toprule

Marionette \cite{Marionette-MICRO2023} & C/C++ & CDFG & \LEFTcircle & \Circle & \Circle & \Circle & \Circle & \LEFTcircle \\

Plasticine \cite{Plasticine-ISCA2017} & Spatial \cite{Spatial-PLDI2018} & CDFG & \Circle & \Circle & \Circle & \Circle & \Circle & \LEFTcircle \\

Ripple \cite{Ripple-PLDI2025} & C/C++ & Steering Control & \LEFTcircle & \Circle & \CIRCLE & \Circle & \LEFTcircle & \LEFTcircle \\

RipTide \cite{RipTide-MICRO2022} & C/C++ & Steering Control & \LEFTcircle & \Circle & \CIRCLE & \Circle & \LEFTcircle & \LEFTcircle \\

TRIPS \cite{TRIPS-2004TACO} & C & Limited Predication & \Circle & \Circle & \Circle & \Circle & \Circle & \LEFTcircle \\

SARA \cite{SARA-ISCA2021} & Spatial & Limited Predication & \Circle & \Circle & \LEFTcircle & \Circle & \CIRCLE & \LEFTcircle \\

ICED \cite{ICED-MICRO2024} & C/C++ & Limited Predication & \LEFTcircle & \CIRCLE & \LEFTcircle & \Circle & \Circle & \LEFTcircle \\

Plaid \cite{Plaid-ASPLOS2025} & C/C++ & Limited Predication & \LEFTcircle & \LEFTcircle & \LEFTcircle & \Circle & \Circle & \LEFTcircle \\

\textbf{NEURA} & C/C++/MLIR & Hierarchical Predication & \CIRCLE & \CIRCLE & \CIRCLE & \CIRCLE & \CIRCLE & \CIRCLE \\

\bottomrule
\multicolumn{9}{l}{$\mathrm{\ast}$ Compiler extensibility for diverse microarchitectural features (e.g., specialized FUs).}\\
\multicolumn{9}{l}{$\mathrm{\dagger}$ Whether the framework can represent an entire kernel as a single, unified dataflow graph.}\\
\multicolumn{9}{l}{$\mathrm{\ddagger}$ Whether the framework can target both spatial-only and spatio-temporal execution models.}
\end{tabular}
}
\vspace*{-0.6\baselineskip}
\end{table}

The analysis in Sec. \ref{sec2:background} reveals that existing approaches fail to provide a unified, general, and retargetable approach for managing complex control flows across diverse CGRA architectures. In this section, we provide an overview of NEURA. What differentiates NEURA from existing works is its NEURA Dataflow IR. Its ability to flatten a kernel's CDFG into a DFG directly obviates the \textit{\textbf{Pitfall I}} (see Sec. \ref{sec5:transform}) issue of serialization via external controllers. NEURA Dataflow IR also decouples computational semantics from the CGRA execution model. This versatility allows NEURA to target both energy-efficient \textit{\textbf{spatial-only}} and high-performance \textit{\textbf{spatio-temporal}} execution models, overcoming the \textit{\textbf{Pitfall II}} (see Sec. \ref{sec7:implementation}) problems of rigid \textbf{\textit{spatial-only}} execution and tight hardware coupling. Unlike traditional predication, which lacks the semantics for nested control flow, NEURA introduces a novel predicated type system and predicate-management operations. This provides native support for hierarchical control dependencies, directly resolving the core representational flaw of \textit{\textbf{Pitfall III}} (see Sec. \ref{sec4:ir}) concerning the failure to unify hierarchical predicates.

\begin{figure}[h]
    \centering
    \vspace*{-0.15\baselineskip}
    \includegraphics[width=1.0\textwidth]{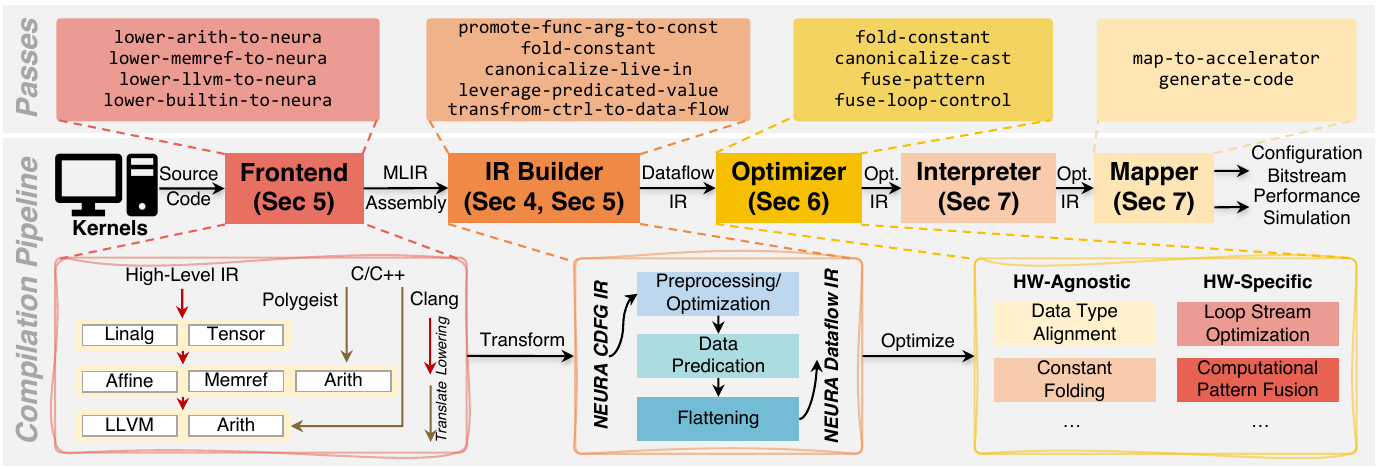}
    \vspace*{-0.5\baselineskip}
    \caption{Overview of the NEURA Compilation Flow -- The frontend accepts C/C++ and high-level IR kernels and lowers them into the NEURA CDFG IR. The IR builder then converts the kernel into the NEURA Dataflow IR through preprocessing, data predication, and flattening. The optimizer refines the dataflow IR using HW-Agnostic (e.g., constant folding) and HW-Specific optimizations (e.g., loop streaming). The optimized IR can be validated by the interpreter and processed by the mapper to get mapping results for configuration bitstream generation or performance simulation.
    }
    \label{fig:overview}
    \vspace*{-0.5\baselineskip}
\end{figure}

The NEURA compilation flow, as shown in Fig. \ref{fig:overview}, is built on the MLIR \cite{MLIR-CGO2021} framework. The flow begins at the extensible frontend. It supports any kernel input that can be lowered into standard dialects (e.g., \textit{llvm} \cite{LLVMDialect-2025Online} and \textit{arith} \cite{arithDialect-2025Online} dialects). For now, our frontend accepts C/C++ (via Clang \cite{Clang-2025Online} or Polygeist \cite{Polygeist-2021PACT}) and high-level IR like \textit{linalg} \cite{LinalgDialect-2025Online} and \textit{tensor} \cite{TensorDialect-2025Online} dialects and lowers them into our NEURA CDFG IR (see Sec. \ref{sec4:ir}). This IR preserves the conventional CFG structure for preprocessing. Next, the IR builder performs the core transformation (see Sec. \ref{sec5:transform}), systematically converting the NEURA CDFG IR into our novel NEURA Dataflow IR (see Sec. \ref{sec4:ir}), representing the kernel as a pure DFG. This dataflow IR then enters a two-phase optimizer, undergoing hardware-agnostic optimizations to safely optimize the DFG. This is followed by hardware-specific optimizations that leverage hardware-specific microarchitectural features to tailor the DFG for a specific CGRA (see Sec \ref{sec6:optimization}). The optimized IR can then be sent to the interpreter for functional validation. It can also be passed to the mapper, which takes hardware specifications and the IR to generate the mapping result for target CGRAs. The mapping result can drive a cycle-accurate simulator for fast evaluation or produce the configuration bitstream for target CGRAs.

\begin{figure}[t]
    \centering
    \includegraphics[width=1.0\textwidth]{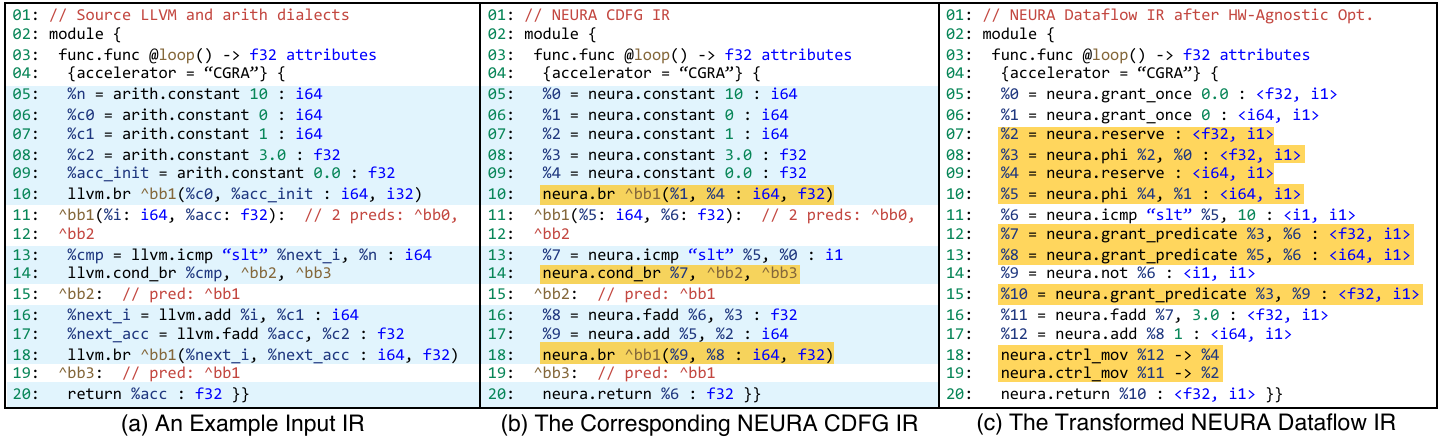}
    \vspace*{-1.4\baselineskip}
    \caption{A Kernel Example in NEURA --- (a) The input simple accumulation kernel represented in \textit{llvm} and \textit{arith} dialects. (b) The corresponding NEURA CDFG IR for further transformations. (c) The NEURA Dataflow IR leverages our predicated type system to represent the kernel in a pure dataflow manner. The \textcolor{myblue}{blue} blocks show the corresponding code between the input IR and the NEURA CDFG IR, while the \textcolor{myyellow}{yellow} blocks show the corresponding control-flow logic in CDFG IR and its dataflow representation.
    }
    \label{fig:example}
    \vspace*{-0.7\baselineskip}
\end{figure}

Fig. \ref{fig:example} shows how a simple kernel is represented during the NEURA compilation flow. The input starts as standard \textit{llvm} and \textit{arith} dialects, as shown in Fig. \ref{fig:example}(a). It is first converted to the NEURA CDFG IR, as shown in Fig. \ref{fig:example}(b), which preserves the explicit control flow via branch instructions (i.e., \texttt{br}, \texttt{cond\_br}). After lowering and optimization, the kernel becomes the pure NEURA Dataflow IR, as shown in Fig. \ref{fig:example}(c). This IR is built on our novel predicate type system, where every value's type is augmented with a predicate, denoted by \texttt{i1} (e.g., \texttt{<f32, i1>}). This predicate makes the data's validity an intrinsic property of the data. This mechanism, combined with specialized operations (e.g., \texttt{grant\_predicate}), allows the kernel's control flow to be flattened into data dependencies, enabling the kernel to be executed in a pure dataflow manner on a CGRA.

%% file: Ch4-IR.tex
\section{The NEURA Intermediate Representation}\label{sec4:ir}
The NEURA IR is designed to systematically bridge the \textbf{\textit{control-dataflow semantic gap}} between CDFG kernel representations and the native dataflow nature of CGRAs. It leverages the NEURA CDFG IR as a high-level intermediary to canonicalize standard compiler dialects (e.g., \textit{llvm} and \textit{arith} dialects). This representation is then systematically lowered into our NEURA Dataflow IR. This novel representation expresses the kernel, including its control flow, in a unified, pure dataflow manner. This section describes the role of the NEURA CDFG IR in Sec. \ref{sec4.1:cdfg-ir}, details the design of the NEURA Dataflow IR in Sec. \ref{sec4.2:dataflow-ir}, including its type system, operation set, and concludes in Sec. \ref{sec4.3:interface} by discussing its interface with the hardware Instruction Set Architecture (ISA).

\subsection{The NEURA CDFG IR: A Canonicalization Bridge for Dataflow Transformation}\label{sec4.1:cdfg-ir}
The transformation from a CDFG kernel representation to a pure dataflow representation requires converting control dependencies into data dependencies. Our approach is to embed control context into the data values using the predicated type system (see Sec. \ref{sec4.2:dataflow-ir}). However, a key challenge arises because data dependencies between BBs in a CDFG can be implicit (see Sec. \ref{sec5.2:preprocessing}). A value defined in one BB may be used several BBs later (e.g., \texttt{\%6} used in bb3 in Fig. \ref{fig:end2end-example}(b)), meaning it is live through intermediate BBs, but no data value is explicitly passed through them. Since our predicates must be attached to a data value, this absence of an explicit dataflow makes it hard to systematically propagate predicate information from one BB to the next. To solve this, we introduce the NEURA CDFG IR as a critical canonicalization stage. Its primary purpose is to provide a structured, LLVM-like representation that enables preprocessing passes to make the subsequent flattening simple and systematic. It adopts the conventional CDFG format and serves three key functionalities.

First, it provides a well-defined and extensible entry point for the core control-flow-to-dataflow transformation (see Sec. \ref{sec5:transform}), allowing NEURA to easily support new frontends. Second, it enables crucial preprocessing steps, specifically live-in canonicalization (see Sec. \ref{sec5.2:preprocessing}). This step transforms the NEURA CDFG IR to make inter-BB data dependencies explicit along CFG edges, as shown in Fig. \ref{fig:end2end-example}(d). This is essential because once every data dependency is explicit, predicate information can be systematically embedded (see Sec. \ref{sec5.4:flatenning}). Finally, our custom CDFG IR provides a natural home for domain-specific information. This allows us to enrich the IR by attaching CGRA-specific attributes to operations, such as metadata for constant handling (e.g., constant attributes in Fig. \ref{fig:end2end-example}(c)). Since these attributes lack semantic equivalents in standard dialects, this customizability makes the framework extensible to new CGRA features and optimizations.

\begin{table}[t]
\centering
\caption{A Representative List of Operations in the NEURA Dataflow IR}
\vspace*{-0.7\baselineskip}
\label{table:operations}
\resizebox{\linewidth}{!}{ 
\begin{tabular}{lp{6cm}lp{6cm}}
\toprule

\multicolumn{2}{c}{\textbf{Predicate Management Operations}} & \multicolumn{2}{c}{\textbf{Predicated States-Access Operations}} \\
\toprule
\texttt{neura.grant\_once(val)} & Assigns a single-use true predicate to an initial value \texttt{val}. & \texttt{neura.return(val)} & Terminates the program with \texttt{val} if its predicate is true. \\

\texttt{\multirow{2}{*}{\makecell[l]{neura.grant\_predicate\\(val, cond)}}} & Grants a new predicate to \texttt{val} based on the boolean data \texttt{cond}. & \texttt{neura.load(addr)} & Loads from \texttt{addr} only if the predicate is true. \\

\texttt{neura.phi(a, b, ...)} & Merges multiple dataflow paths by selecting the unique input with a true predicate. & \texttt{neura.store(val, addr)} & Stores \texttt{val} to memory only if both operands' predicates are true. \\

\texttt{neura.loop\_control$^\mathrm{\ast}$} & Fused operation generating the next index and validity predicate for a loop. & \texttt{\multirow{2}{*}{\makecell[l]{neura.load\_indexed$^\mathrm{\ast}$\\(base, <indices>)}}} & Fused operation combing address calculation and a predicated load. \\
\cmidrule[\heavyrulewidth](){1-4}
\multicolumn{2}{c}{\textbf{Predicated Computational Operations}} & \multicolumn{2}{c}{\textbf{Non-Materialized Structural Operations}}\\
\cmidrule[\heavyrulewidth](){1-4}

\texttt{neura.add(a, b)} & Computes the predicated sum of predicated value \texttt{a} and \texttt{b}. & \texttt{neura.reserve} &  Creates a placeholder for a value in a backward dataflow path.\\
\texttt{neura.icmp(cmp, a, b)} & Compares two predicated values \texttt{a} and \texttt{b}. Outputs a predicated boolean result. & \texttt{\multirow{2}{*}{\makecell[l]{neura.ctrl\_mov\\(val, placeholder)}}} & Defines a data dependency edge from \texttt{val} to \texttt{placeholder}. \\
\texttt{neura.muladd(a, b, c)$^\mathrm{\ast}$} & A hardware-specific multiply-add operation operating on predicated values. &  \\
\bottomrule
\multicolumn{4}{l}{$\mathrm{\ast}$ Hardware-specific operations, detailed in Sec. \ref{sec6.2:hw-specific}.}
\end{tabular}
}
\vspace*{-0.9\baselineskip}
\end{table}

\subsection{The NEURA Dataflow IR: A Predicated IR Enabling Hierarchical Predicates}\label{sec4.2:dataflow-ir}
The NEURA Dataflow IR is designed to holistically represent a kernel as a single, pure DFG. By transforming control flow into data dependencies, this representation enables kernels with complex control flows to be mapped and executed on CGRA fabrics. This is achieved through two foundational components --- a predicated type system and an operation set.

\subsubsection{The Predicated Type System: Making Control an Intrinsic Property of Data} The fundamental principle of the NEURA Dataflow IR is its predicated type system. Conventional CDFGs ensure correct kernel execution by explicitly dictating the sequence of operations and guarding state updates. Our predicated type system embeds control context directly into the data by making every data value carry its own validity information. This is achieved by uniformly pairing every data payload with a single predicate bit. As formalized in Fig. \ref{fig:type-system}, any value in the NEURA Dataflow IR is a tuple consisting of its data payload (e.g., an \texttt{f32}) and its predicate (i.e., a boolean \texttt{i1}). This predicate bit signifies whether the data payload is valid on the current execution path. If true, the data can be consumed by downstream operations to affect program state. If false, the data is considered invalid or nullified, and any operation consuming it is guaranteed not to produce side effects. This mechanism naturally handles complex control flows. For example, the predicate for an inner operation implicitly combines the validity derived from all enclosing control structures. This type system is the key to flattening the kernel into a pure dataflow representation.

\begin{figure}[h]
    \centering   
    \begin{tabular}{c}
    $\displaystyle\frac{\Gamma \vdash d : \tau \quad \Gamma \vdash p : \mathbb{B}}{\Gamma \vdash (d, p) : \tau_p}$ \quad \textsc{PredType}
    \end{tabular}
    \vspace*{-0.3\baselineskip}
    \caption{Predicated Type --- Let $\Gamma$ be a typing context. A value is predicated type $\tau_p$ in NEURA if it is a tuple $(d,p)$ pairing a data payload $d$ of standard type $\tau$ (e.g., \texttt{f32}, \texttt{i64}) with a predicate $p$ of the Boolean type $\mathbb{B}$.}
    \label{fig:type-system} 
    \vspace*{-0.8\baselineskip}
\end{figure}

\begin{figure}[t]
    \centering   
    \resizebox{\linewidth}{!}{ 
    \begin{tabular}{c@{\quad}c}
    $\displaystyle\frac{\langle v_{addr}, \sigma\rangle \Downarrow \langle(d_{addr}, \text{true}), \sigma\rangle}{\langle \textbf{load}(v_{addr}), \sigma\rangle \Downarrow \langle(\sigma(d_{addr}), \text{true}), \sigma\rangle}$ \quad \textsc{Load} & $\displaystyle\frac{\langle v_{data}, \sigma\rangle \Downarrow \langle(d_{data}, \text{true}), \sigma\rangle \quad \langle v_{addr}, \sigma\rangle \Downarrow \langle(d_{addr}, \text{true}), \sigma\rangle}{\langle \textbf{store}(v_{data}, v_{addr}), \sigma\rangle \Downarrow \langle(), \sigma[d_{addr}:=d_{data}]\rangle}$ \quad \textsc{Store} \\
    \addlinespace\addlinespace\addlinespace
    \multicolumn{2}{c}{$\displaystyle\frac{\langle v_1, \sigma \rangle \Downarrow \langle(d_1, p_1), \sigma\rangle \quad \langle v_2, \sigma \rangle \Downarrow \langle(d_2, p_2), \sigma\rangle \quad d_{res}=d_1\oplus d_2 \quad p_{res}=p_1\land p_2}{\langle v_1 \bm{\oplus_p} v_2, \sigma\rangle \Downarrow \langle (d_{res}, p_{res}), \sigma \rangle}$ \quad \textsc{PredComp}}\\
    \addlinespace\addlinespace\addlinespace
    \multicolumn{2}{c}{$\displaystyle\frac{\langle v_{in}, \sigma\rangle \Downarrow \langle(d_{in}, p_{in}), \sigma\rangle \quad \sigma(\text{state}_{id})=\text{fresh}}{\langle \textbf{grant\_once}_{id}(v_{in}), \sigma\rangle \Downarrow \langle(d_{in}, \text{true}), \sigma[\text{state}_{id}:=\text{consumed}]\rangle}$ \quad \textsc{GrantOnce}} \\
    \addlinespace\addlinespace\addlinespace
    \multicolumn{2}{c}{ $\displaystyle\frac{\langle v_{val}, \sigma\rangle \Downarrow \langle(d_{val}, p_{val}), \sigma\rangle \quad \langle v_{cond}, \sigma\rangle \Downarrow \langle(d_{cond}, p_{cond}), \sigma\rangle}{\langle \textbf{grant\_predicate}(v_{val}, v_{cond}), \sigma\rangle \Downarrow \langle(d_{val}, d_{cond}\land p_{cond}), \sigma\rangle}$ \quad \textsc{GrantPred}} \\
    \addlinespace\addlinespace\addlinespace
    \multicolumn{2}{c}{$\displaystyle\frac{\forall i\in\{1\dots n\}: \langle v_i, \sigma \rangle \Downarrow \langle(d_i, p_i), \sigma\rangle \quad \exists! k\in \{1\dots n\}:p_k=\text{true}}{\langle \textbf{phi}(v_1, \dots,v_n), \sigma \rangle \Downarrow \langle(d_k, \text{true}), \sigma\rangle}$ \quad \textsc{Phi}}
    \end{tabular}
    }
    \vspace*{-0.2\baselineskip}
    \caption{A Selection of Big-Step Structural Operational Semantics for NEURA Dataflow IR --- The relation $\Downarrow$ defines the evaluation of expressions. The state $\sigma$ is an environment that includes memory and the internal state of \texttt{grant\_once} operations. A value $v$ is a tuple $(d,p)$. In \textsc{PredComp} rule, $\oplus_p$ denotes the predicated computational operations while $\oplus$ denotes the standard computational operations on the data payloads.}
    \label{fig:semantics} 
    \vspace*{-1.1\baselineskip}
\end{figure}

\subsubsection{The NEURA Operation Set: From Traditional to Hierarchical Predication}\label{sec4.2.2:operation-set} 
The NEURA Dataflow IR is composed of a set of operations designed to compute on predicated values, manage state, and explicitly handle the predicate logic that replaces traditional control flow. As detailed in Table \ref{table:operations}, these operations fall into four categories: (1) \textbf{Predicated Computational Operations} include standard arithmetic and logical operations lifted to operate on predicated values (\textsc{PredType} in Fig. \ref{fig:type-system}); (2) \textbf{Predicated State-Access Operations} interact with stateful resources, such as memory and the kernel's termination state, where physical side effects are guaranteed to occur only if all the operands are valid; (3) \textbf{Predicate Management Operations} contain the specialized operations that create, manipulate, and merge predicates to encode control flow; (4) \textbf{Non-Materialized Structural Operations} are used to structure the dataflow graph (e.g., loop recurrences) and maintain Static Single Assignment (SSA) properties \cite{SSA-PPoPP1989} without corresponding to physical hardware (see Sec. \ref{sec5.4:flatenning}).

Fig. \ref{fig:semantics} presents a selection of big-step structural operational semantics for the NEURA Dataflow IR, defining the evaluation judgment and clarifying its behaviors. The rules illustrate how the predicated type system is operationalized. For example, the \textsc{PredComp} rule shows that computational validity is contingent on the operand validity. The \textsc{Load} and \textsc{Store} rules stipulate that an operation with a false predicate operand does not satisfy the premises, thus producing no side effect on kernel state. The \textsc{GrantOnce}, \textsc{GrantPred}, and \textsc{Phi} rules formally define the mechanisms for creating, transforming, and merging predicated values for control flow encoding.

\subsection{Interface with the Hardware ISA and Retargetability}\label{sec4.3:interface}

A principle of the NEURA Dataflow IR is its hardware-conscious design. The IR is designed to be highly expressive while requiring minimal, low-cost extensions to CGRAs' ISA. This design philosophy ensures NEURA can be readily adopted by existing or future CGRAs.

\begin{wrapfigure}{r}{0.45\textwidth}
  \vspace*{-0.8\baselineskip}
  \centering
  \includegraphics[width=0.45\textwidth]{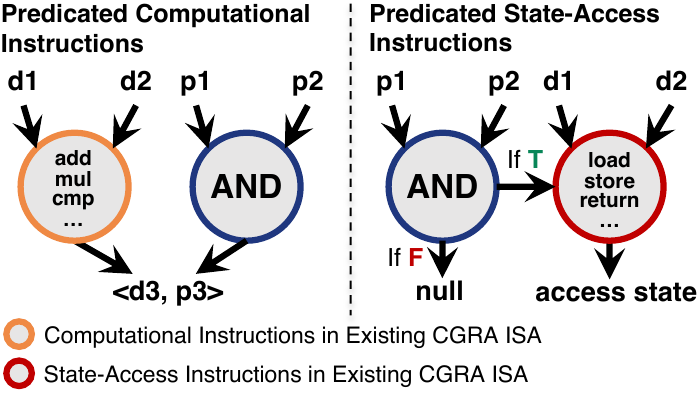}
  \vspace*{-1.0\baselineskip}
  \caption{Extending the Computational and State-Access Instructions in Existing ISA.
  }
  \label{fig:isa-extension}
  \vspace*{-0.5\baselineskip}
\end{wrapfigure}
\textbf{Modifications to Existing ISA.} NEURA requires only minor and uniform modifications to the existing CGRA ISA. As shown in Fig. \ref{fig:isa-extension}, \textbf{computational instructions} are augmented to accept predicate bits. They perform the original computation in parallel with a logical AND of all incoming predicates to produce the output predicate. \textbf{State-access instructions} (e.g., \texttt{load}, \texttt{store}, \texttt{return}) use the logical AND of the input predicates as an enable signal, triggering side effects (e.g., memory read/write) only if it is true. These modifications are low-cost, typically requiring one additional AND gate per FU.

\textbf{Introduction of New ISA Instructions.} To manage the predicated type system, NEURA introduces only three new specialized instructions, \texttt{grant\_once}, \texttt{grant\_predicate}, and \texttt{phi} (see Sec. \ref{sec4.2.2:operation-set}). Predicated computational and state-access instructions propagate existing predicate bits (typically via an AND gate). In contrast, these new instructions generate and manipulate the predicate bits. These three instructions are the essential hardware primitives that enable the NEURA compiler to systematically flatten complex control flow into pure dataflow (see Sec. \ref{sec5:transform}).

\textbf{Retargetability.} The required ISA extensions are minimal, comprising three new instructions and a systematic modification to existing computational and state-access instructions. This design is highly adaptable to diverse CGRAs, including those with specialized microarchitectural features (e.g., specialized FUs). The NEURA Dataflow IR is a pure DFG that only defines logical computations and their dependencies, decoupling the representation from the execution model. This also naturally enables the IR to be executed by both \textbf{\textit{spatio-temporal}} and \textbf{\textit{spatial-only}} execution models. Furthermore, facilitated by NEURA's type system and dataflow semantics, we also implemented a compiler pass to transform the NEURA Dataflow IR into a steering control representation, ensuring NEURA's compatibility with existing steering-based architectures (e.g., RipTide \cite{RipTide-MICRO2022}). Together, these properties provide NEURA with comprehensive \textbf{retargetability}.

%% file: Ch5-Transform.tex
\section{Lowering to NEURA Dataflow IR}\label{sec5:transform}

\begin{figure}[t]
    \centering
    \includegraphics[width=1.0\textwidth]{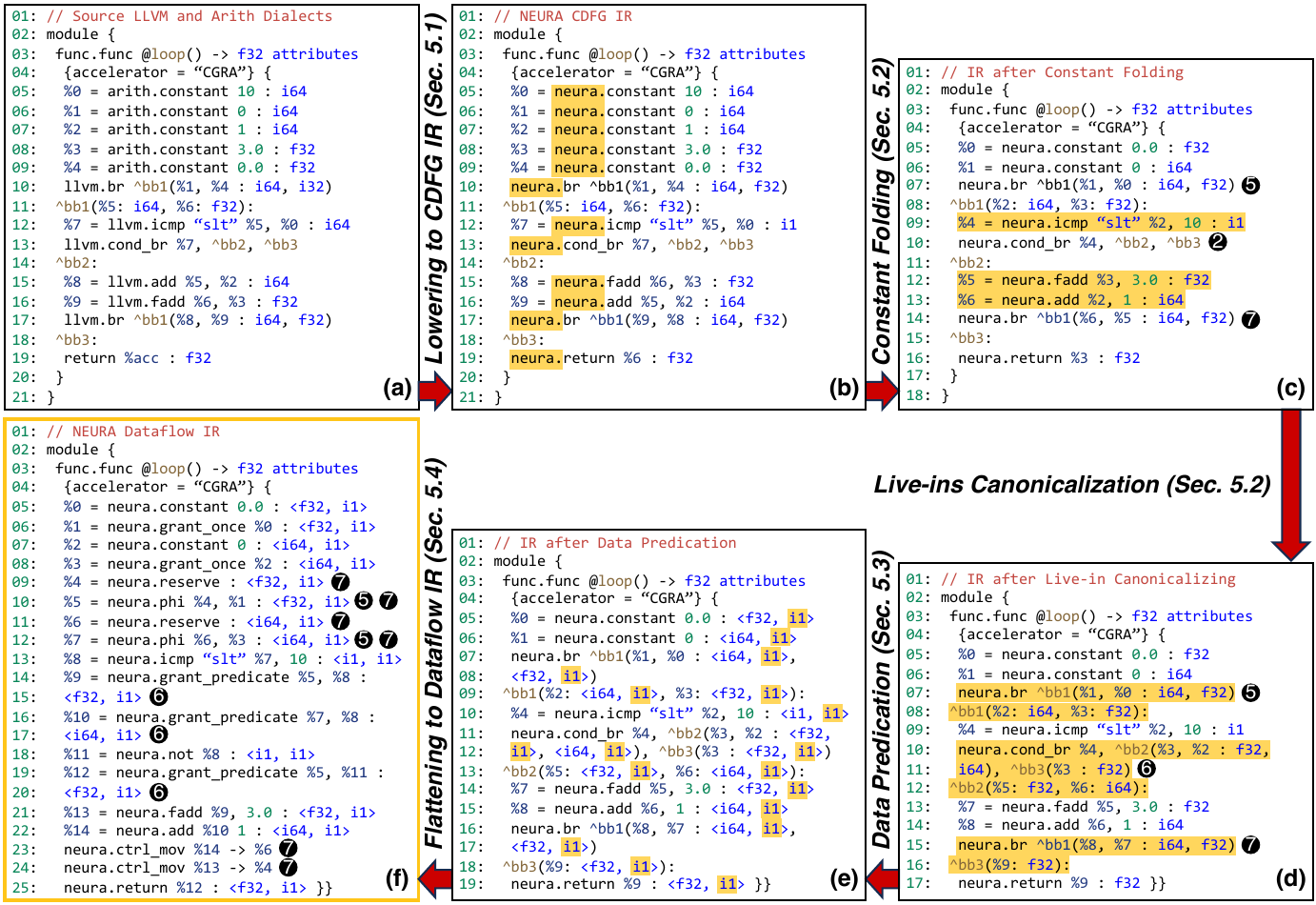}
    \caption{An End-to-End Example of the NEURA Lowering Process --- This figure illustrates the sequential transformation for a simple kernel. (a) standard MLIR dialects are lowered to (b) NEURA CDFG IR (Sec. \ref{sec5.1:initial-lowering}). The IR is then preprocessed via (c) constant folding and (d) live-in canonicalization (Sec. \ref{sec5.2:preprocessing}). This is followed by (e) data predication (Sec. \ref{sec5.3:predication}) and finally flattened into (f) the NEURA Dataflow IR (Sec. \ref{sec5.4:flatenning}). The \textcolor{myyellow}{yellow} blocks indicate the code delta from the preceding phase. The circled numbers (e.g., \protect\circledblack{2}, \protect\circledblack{5}, \protect\circledblack{6}, \protect\circledblack{7}) annotate the CFG edge types in (c) and (d), and track their transformation into dataflow operations in (f). The CFG edge types are defined in Fig. \ref{fig:rewriting-rules}. 
    }
    \label{fig:end2end-example}
    \vspace*{-1.3\baselineskip}
\end{figure}
A key contribution of NEURA is its systematic methodology to transform kernels represented in a conventional CDFG manner into our pure dataflow representation. This transformation is the core of NEURA to bridge the \textbf{\textit{control-dataflow semantic gap}} between the CDFG kernel representation and the dataflow nature of CGRA fabrics. Built upon the MLIR infrastructure, this process is structured as a sequence of well-defined passes, as outlined in Fig. \ref{fig:overview}. This section details the four primary stages of this lowering process: initial translation to NEURA CDFG IR (Sec. \ref{sec5.1:initial-lowering}), preprocessing (Sec. \ref{sec5.2:preprocessing}), systematic data predication (Sec. \ref{sec5.3:predication}), and final flattening into the NEURA Dataflow IR (Sec. \ref{sec5.4:flatenning}). Fig. \ref{fig:end2end-example} provides an end-to-end example to illustrate this lowering process.

\subsection{Lowering to NEURA CDFG IR}\label{sec5.1:initial-lowering}
This stage transforms the kernel from source code or high-level IR into our NEURA CDFG IR. We first leverage existing frontends (e.g., Clang \cite{Clang-2025Online} and Polygeist \cite{Polygeist-2021PACT}) to lower kernels into standard MLIR dialects (see Fig. \ref{fig:end2end-example}(a)), like \textit{llvm} \cite{LLVMDialect-2025Online}, \textit{arith} \cite{arithDialect-2025Online}, and \textit{memref} \cite{memrefDialect-2025Online}. Then, as illustrated in Fig. \ref{fig:overview}, a set of custom conversion passes transforms these standard dialects into their semantic equivalents in the NEURA CDFG IR (e.g., \texttt{arith.addi} and \texttt{llvm.br} become \texttt{neura.add} and \texttt{neura.br}, respectively). This conversion establishes a common, CDFG-based starting point tailored for subsequent transformations, unifying diverse inputs while preserving the original CDFG structure. The resulting NEURA CDFG IR (see Fig. \ref{fig:end2end-example}(b)) serves as the input for the preprocessing stage.

\begin{figure}[t]
    \centering
    \includegraphics[width=1.0\textwidth]{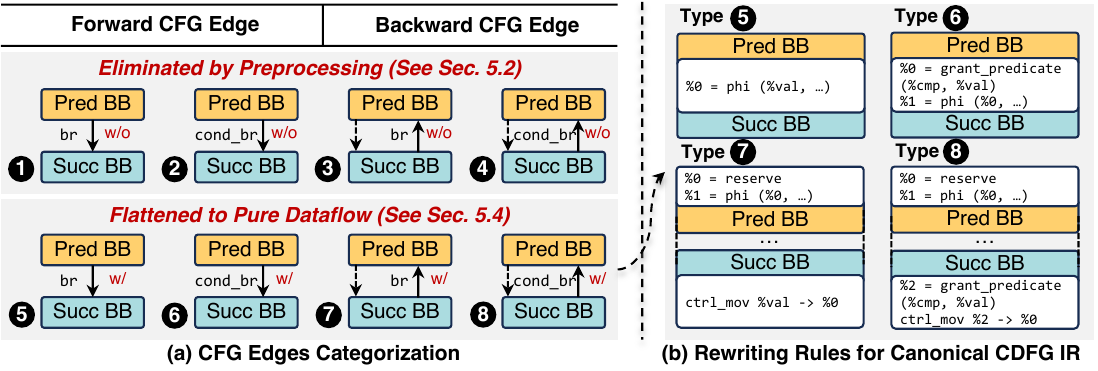}
    \vspace*{-0.8\baselineskip}
    \caption{CFG Edge Categorization and Rewriting Rules --- (a) CFG edges in a kernel can be categorized into eight types. Edges without explicit value passing (type \protect\circledblack{1} - \protect\circledblack{4}) are eliminated by preprocessing. (b) We can convert the remaining four types of edges into pure dataflow dependencies via deterministic rewriting rules.}
    \label{fig:rewriting-rules}
    \vspace*{-0.7\baselineskip}
\end{figure}
\subsection{Preprocessing on NEURA CDFG IR}\label{sec5.2:preprocessing}
Before flattening to a pure dataflow representation, the NEURA CDFG IR undergoes several preprocessing passes. These passes are designed to simplify the IR's structure and establish a canonical form for the subsequent flattening stage. While this stage also includes hardware-agnostic optimizations (e.g., constant folding, detailed in Sec. \ref{sec6:optimization}), this section focuses specifically on two canonicalization processes that are critical for enabling the dataflow flattening.

First, in NEURA, kernels targeted for CGRA acceleration are represented as functions. To handle their parameters consistently with the dataflow model, we apply the \textbf{function arguments promotion}. This transformation replaces all uses of a function's arguments with \texttt{neura.constant} operations materialized in the entry BB. This ensures all inputs to the function are treated uniformly as SSA values defined by operations, simplifying subsequent analysis and transformations.

Second, we perform \textbf{live-ins canonicalization} to make inter-BB data dependencies explicit. This explicit representation is a prerequisite for systematically embedding predicate information onto data values during the flattening stage. As shown in Fig. \ref{fig:rewriting-rules}(a), we classify CFG edges into eight types based on their direction (forward or backward), terminator type (\texttt{br} or \texttt{cond\_br}), and whether they explicitly carry SSA values (w/ or w/o values). These different edge types create a non-uniform structure that complicates further flattening. Some edges do not explicitly pass SSA values (e.g., type \circledblack{2} edge in Fig. \ref{fig:end2end-example}(c)), which hinders predicate propagation. Others may carry values, but fail to propagate the complete set of live-ins required by the successor, leading to incorrect predication. To establish a uniform and complete mechanism for CDFG flattening, our \texttt{canonicalize-live-in} pass, detailed in Algorithm \ref{alg:canonicalize-live-in}, transforms the CDFG into a canonical form where every control flow edge explicitly passes the complete set of live-in values to its target block. For example, this pass transforms the type \circledblack{2} edge in Fig. \ref{fig:end2end-example}(c) into a type \circledblack{6} edge in Fig. \ref{fig:end2end-example}(d).

The \texttt{canonicalize-live-in} pass operates in two phases, as shown in Algorithm \ref{alg:canonicalize-live-in}. \textbf{Phase 1} performs a fixed-point analysis to compute the complete transitive live-ins for each BB. It initializes the live-in sets based on direct uses (lines 2-3) and then iteratively propagates liveness information through the CFG until convergence (lines 4-11). Specifically, for each BB, it examines its successors' live-ins (line 7). If a successor needs a value not defined in the current BB, that value is added to the current BB's live-in set (lines 9-10). This process repeats until a fixed point is reached, guaranteeing the analysis captures the complete transitive live-ins for each BB. \textbf{Phase 2} transforms the CDFG based on the analysis results. For each BB with a non-empty live-in set (line 13), the pass first adds new block arguments corresponding to these live-ins (line 15), replaces uses of the original live-in value with these new arguments (lines 6), and rewrites the terminators of all predecessor blocks to explicitly pass the required values when branching to this BB (lines 18-19). This transformation is proven complete in Theorem \ref{the:live-in}.

\algtext*{EndFor}
\algtext*{EndIf}
\algtext*{EndWhile}
\algrenewcommand\algorithmicrequire{\textbf{Input:}}
\algrenewcommand\algorithmicensure{\textbf{Output:}}
\begin{algorithm}[t]
\footnotesize
\begin{spacing}{0.9}
\caption{The Canonicalize Live-in Pass}
\label{alg:canonicalize-live-in}
\begin{algorithmic}[1]
    \Require A function $F$ represented by a CDFG.
    \Ensure A canonical CDFG representation with all inter-block data dependencies explicitly passed along CFG edges.
    \State Let $L_{in}$ be a map from a block to the set of live-in values; \hfill \Comment{\textbf{Phase 1}: Fixed-point live-in analysis}
    \ForAll{block $B \in F$ where $B$ is not the entry block} \Comment{Initialize the live-in set for each basic block (BB)}
        \State $L_{in}[B] \leftarrow$ ComputeDirectLiveIns($B$);
    \EndFor
    \While{$changed$} \Comment{Compute the complete live-ins for each BB}
        \State $changed \leftarrow false$;
        \ForAll{block $B \in F$ where $B$ is not the entry block}
            \ForAll{successor $S$ of $B$}
                \State $L_{propagate} \leftarrow L_{in}[S] \setminus \text{DefinedIn}(B)$;
                \If{$L_{propagate} \not\subseteq L_{in}[B]$}
                    \State $L_{in}[B] \leftarrow L_{in}[B] \cup L_{propagate}$;
                    \State $changed \leftarrow true$;
                \EndIf
            \EndFor
        \EndFor
    \EndWhile
    \State Let $A_{new}$ be an empty list of new arguments for $B$; \hfill \Comment{\textbf{Phase 2}: CDFG transformation}
    \ForAll{block $B \in F$ where $L_{in}[B]$ is not empty}
        \ForAll{value $v \in L_{in}[B]$} \Comment{Replace live-ins with block arguments}
            \State $arg \leftarrow$ AddBlockArgument($B$, TypeOf($v$));
            \State ReplaceAllUseOf($v$, $arg$, within $B$);
            \State $A_{new}$.append(arg);
        \EndFor
        \ForAll{predecessor $P$ of $B$} \Comment{Modify the terminator of predecessors}
            \State RewriteTerminator(GetTerminator($P$), $B$, $A_{new}$);
        \EndFor
    \EndFor
\end{algorithmic}
\end{spacing}
\end{algorithm}

\begin{theorem}
\label{the:live-in}
Given an input function $F$ represented in NEURA CDFG IR containing no semantically irrelevant (i.e., dead) code, the \texttt{canonicalize-live-in} pass transforms $F$ into an equivalent representation $F'$. This transformation guarantees that for every control flow edge $e=(B_{pred}, B_{succ})$ from a predecessor basic block $B_{pred}$ to a successor basic block $B_{succ}$ in $F'$, the terminator of $B_{pred}$ explicitly passes the complete live-in set $LiveIn(B_{succ})$ as block arguments to $B_{succ}$.
\end{theorem}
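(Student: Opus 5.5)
The proof has two parts. First, Phase~1 computes exactly the transitive live-in sets. Second, Phase~2 rewrites the function so that it is semantically equivalent and every edge passes the full live-in set. Throughout, we work in SSA form, where every value $v$ has a unique defining block $\mathrm{def}(v)$ that dominates all uses of $v$.

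Define the reference set $LiveIn(B)$ as the set of values $v$ with $\mathrm{def}(v)\neq B$ such that some CFG path $B=B_0\to B_1\to\dots\to B_k$ reaches a use of $v$ in $B_k$ with no $B_i$ ($i<k$) defining $v$. This is the standard backward liveness restricted to values defined outside $B$. By the no-dead-code hypothesis, every such use is semantically relevant, so $LiveIn(B)$ is exactly the set of values $B$ must receive.

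\textbf{Phase 1 correctness.} Termination follows because each $L_{in}[B]$ only grows and is bounded by the finite set of SSA values. For soundness ($L_{in}[B]\subseteq LiveIn(B)$), we induct on iterations:
\begin{itemize}
\item The initial direct live-ins are in $LiveIn(B)$ trivially, with $k=0$.
\item A propagated value $v\in L_{in}[S]\setminus \mathrm{DefinedIn}(B)$ extends a witnessing path from $S$ by the edge $B\to S$.
\end{itemize}
For completeness ($LiveIn(B)\subseteq L_{in}[B]$ at the fixed point), we induct on the length $k$ of a shortest witnessing path. The case $k=0$ is initialization. For $k>0$, the induction hypothesis gives $v\in L_{in}[B_1]$, and since $v\notin\mathrm{DefinedIn}(B)$, stability of the fixed point forces $v\in L_{in}[B]$.

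\textbf{The entry block.} The algorithm excludes the entry block from both initialization and propagation, so we must check this is harmless. After function-argument promotion, every value is defined by an operation, and the entry block dominates everything. Hence no value can be live-in to the entry block: such a value would be used on a path from the entry before any definition, contradicting dominance. Thus $LiveIn(\text{entry})=\emptyset$, which matches the empty set the algorithm leaves there.

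\textbf{Phase 2.} For each block $B$, we add fresh arguments $a_v$ for each $v\in L_{in}[B]$ in a fixed order, replace uses of $v$ inside $B$ by $a_v$, and extend each predecessor's terminator operand list for the edge to $B$ with the corresponding values. The key subtlety is which SSA name the predecessor $P$ passes for $v$. It cannot always be the original $v$, since $v$ may no longer be in scope directly. The plan is to show the passed operand is well-defined in every case:
\begin{itemize}
\item If $v$ is defined in $P$, then $P$ passes $v$ itself.
\item Otherwise $v\in LiveIn(P)$ by the propagation argument above, so $P$ passes its own new argument $a^P_v$.
\end{itemize}
So the rewrite must be read as replacing, within $P$'s terminator, any use of $v$ by $P$'s argument. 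This is consistent because ReplaceAllUseOf in $P$ includes $P$'s terminator.

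\textbf{Equivalence.} We argue by simulation. Along any execution path, at entry to each block $B$, the runtime value of $a_v$ equals the value $v$ would have in $F$. This follows by induction on path length, since each edge forwards either the definition or the predecessor's copy. Existing block arguments, such as phi-like arguments on typed edges, are untouched, so they keep their semantics. The edge-completeness claim in Theorem~\ref{the:live-in} then follows directly from the construction combined with Phase~1 exactness.

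\textbf{Main obstacle.} The hardest part is this interplay between the order of Phase~2 rewrites and the scoping of the passed operands, i.e. showing the per-predecessor rewriting yields a valid SSA program whatever order blocks are processed in. I would handle this by defining the transformation declaratively, with all new arguments and all operand mappings fixed simultaneously from the final $L_{in}$. I would then show the algorithm's sequential execution produces the same result, because the substitutions only ever map $v\mapsto a^B_v$ locally within $B$.
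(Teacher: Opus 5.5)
Your proposal is correct in substance, but it takes a genuinely different and much more thorough route than the paper. The paper simply \emph{defines} $LiveIn(B_{succ})$ as whatever Phase~1 outputs and splits on whether that set is empty. If it is nonempty, Phase~2 adds the arguments and rewrites every predecessor's terminator, so the edge property holds by construction. If it is empty, the paper invokes the no-dead-code premise to declare the case impossible. It never shows that the fixed point equals the true transitive liveness. It also never addresses the exclusion of the entry block or which operand a predecessor actually passes, and never argues the ``equivalent representation'' clause.

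Your argument supplies exactly these pieces:
\begin{itemize}
\item a path-based reference liveness, with soundness and completeness of the fixed point;
\item $LiveIn(\mathrm{entry})=\emptyset$ by dominance;
\item a well-defined operand choice that yields valid SSA with every use block-local;
\item a simulation argument for equivalence.
\end{itemize}
It also makes the dead-code hypothesis unnecessary: an empty live-in set makes the edge property vacuous rather than impossible. That is the more accurate treatment, since an exit block containing only a void return has an empty live-in set without being dead. The paper's version buys brevity at the price of being close to tautological.

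Two small repairs are needed. First, in your reference definition, require that no $B_i$ with $i\le k$ defines $v$, so that the use in $B_k$ is upward-exposed. With only $i<k$, a latch branching to a loop header that defines and then uses $v$ would make $v$ spuriously live-in to the latch. Your completeness step ``the IH gives $v\in L_{in}[B_1]$'' then breaks for $k=1$ with $B_1=\mathrm{def}(v)$.

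Second, the obstacle you single out is milder than you suggest. Since $\mathrm{def}(v)$ strictly dominates every reachable block where $v$ is live-in, it dominates each reachable predecessor $P$. Phase~2 never deletes definitions, so passing the original $v$ from $P$ is always legal SSA. The literal algorithm is order-dependent: if $P$ is processed before $B$, $P$'s terminator ends up passing $v$ rather than $a^P_v$. So your claim that sequential execution reproduces the declarative result is not quite right. The discrepancy, however, affects neither SSA validity, nor the runtime value passed, nor the edge property the theorem asserts.
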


\begin{proof}
Consider an arbitrary control-flow edge $e=(B_{pred}, B_{succ})$ in the transformed CDFG representation $F'$. Let $LiveIn(B_{succ})$ be the live-in set of $B_{succ}$ computed by the fixed-point analysis in Phase 1 of Algorithm \ref{alg:canonicalize-live-in}. We analyze two mutually exclusive cases on this set:

\textbf{Case 1: $LiveIn(B_{succ})\neq \emptyset$} --- In this case, the CDFG transformation (Phase 2) explicitly adds block arguments to $B_{succ}$ for all $v\in LiveIn(B_{succ})$ and rewrites the terminator of $B_{pred}$ to pass these values along edge $e$. Thus, the edge $e$ satisfies the theorem's property by construction.

\textbf{Case 2: $LiveIn(B_{succ})= \emptyset$} --- An empty live-in set for $B_{succ}$ implies that no value defined outside of $B_{succ}$ is used within it or is needed by any of its successors. For any reachable successor $B_{succ}$ in a function without dead code, this condition cannot hold. As the theorem premise excludes programs without dead code, this case is precluded for any relevant edge.

For any edge in a semantically meaningful program, Case 1 must hold. Thus, the pass is guaranteed to transform all such edges to explicitly pass the complete live-in set of the target BB.
\end{proof}

\subsection{Data Predication}\label{sec5.3:predication}
This stage systematically lifts the entire kernel into our predicated type system (see Sec. \ref{sec4.2:dataflow-ir}). The \texttt{leverage-predicated-value} pass converts the type of every SSA value from a standard type $\tau$ to its predicated counterpart $\tau_p$ (e.g., \texttt{f32} becomes \texttt{<f32, i1>}) and replaces each standard operation with its predicated version that operates on these new predicated types. The result of this stage, as illustrated in Fig. \ref{fig:end2end-example}(e), is a kernel where every value explicitly carries a validity predicate.

\subsection{Flattening to NEURA Dataflow IR}\label{sec5.4:flatenning}
The final stage, accomplished by \texttt{transform-ctrl-to-data-flow} pass, flattens the NEURA CDFG IR into the NEURA Dataflow IR. Leveraging the canonical form from preprocessing (see Sec. \ref{sec5.2:preprocessing}), this pass only needs to handle the four CFG edge types that explicitly carry values (type \circledblack{5} - \circledblack{8} edges in Fig. \ref{fig:rewriting-rules}(a)). Deterministic rewriting rules are applied to each of these edges, transforming control dependencies into data dependencies using our predicate management operations and non-materialized structural operations, as illustrated in Fig. \ref{fig:rewriting-rules}(b). A key challenge arises when flattening backward edges (type \circledblack{7} - \circledblack{8} edges). Representing such backward dependencies directly in a DFG risks violating SSA form \cite{SSA-PPoPP1989}, as a value might appear used before it is defined. To address this while strictly maintaining SSA, we introduce non-materialized structural operations (see Sec. \ref{sec4.2:dataflow-ir}). First, \texttt{neura.reserve} defines a placeholder SSA value, acting as a forward declaration. Then, \texttt{neura.ctrl\_mov} establishes the backward data dependency edge to update this placeholder. By applying these rules, all branch operations are eliminated, flattening all BBs into a single block. The final result, as shown in Fig. \ref{fig:end2end-example}(f), is a kernel represented by the NEURA Dataflow IR.

%% file: Ch6-Optimization.tex
\section{Integrating Hardware Agnostic and Specific Optimizations}\label{sec6:optimization}

NEURA integrates optimizations at multiple stages of its compilation pipeline. A key strength of NEURA's MLIR-based design is its extensibility. This allows the NEURA frontend to accept high-level MLIR dialects (e.g., \textit{linalg} and \textit{affine} dialects) and naturally benefit from their upstream optimizations, such as loop tiling or loop fusion. In addition, NEURA integrates its own optimization passes, which are the focus of this section. We classify these NEURA internal passes into two categories: \textbf{Hardware-Agnostic Optimizations} (see Sec. \ref{sec6.1:hw-agnostic}) for logical simplification of the IR, and \textbf{Hardware-Specific Optimizations} (see Sec. \ref{sec6.2:hw-specific}) to tailor the IR for target CGRA microarchitectural features. This section presents representative passes for each category, but NEURA remains extensible, allowing users to easily define their own custom optimizations. More optimizations and implementation details are available in our open-source release.

\subsection{Hardware-Agnostic Optimization}\label{sec6.1:hw-agnostic}
These optimizations are hardware-agnostic, performing logically equivalent transformations on the NEURA CDFG/Dataflow IR. They aim to produce a more efficient IR by reducing redundancy and canonicalizing data types, and are portable across all supported CGRA targets.

\textbf{Data Type Alignment.} Frontends often introduce abstract data types that lack the specific bit width required by hardware. Our \texttt{canonicalize-cast} pass resolves this by canonicalizing these abstract data types (e.g., \texttt{index}) into concrete base types (e.g., \texttt{i32} or \texttt{i64}) based on user specification. This transformation ensures all data types are explicitly sized before hardware-specific stages, which is a prerequisite for correct mapping and resource allocation.

\textbf{Constant Folding.} In standard MLIR, constants are materialized via dedicated \texttt{constant} operations \cite{LLVMConstant-2025Online}. This approach is inefficient for CGRAs because each DFG operation maps to a physical hardware tile on CGRAs, and dedicating a compute tile solely to produce or forward a constant value significantly wastes hardware resources. These explicit constant operations also add unnecessary nodes and edges to the DFG, increasing mapping complexity. The \texttt{fold-constant} pass resolves this by embedding constant operands as attributes (i.e., immediate values) directly within their consuming operations, simplifying the DFG and freeing hardware resources. Fig. \ref{fig:example}(c) shows its application on the NEURA Dataflow IR, and Fig. \ref{fig:end2end-example}(c) shows its use during the preprocessing stage.

\subsection{Hardware-Specific Optimization}\label{sec6.2:hw-specific}
These passes specialize the general NEURA Dataflow IR by leveraging CGRA \textbf{microarchitectural features}, which showcases NEURA's retargetability. NEURA facilitates this specialization through MLIR's pattern-matching and rewriting infrastructure, allowing new hardware-specific operations to be easily integrated. We demonstrate this capability through two example optimizations.

\textbf{Computational Pattern Fusion.} Many CGRAs provide specialized FUs that fuse common operation patterns into a single, more efficient instruction (e.g., \texttt{mul-add}). Our \texttt{fuse-pattern} pass identifies patterns in our IR that match these predefined patterns and replaces them with corresponding hardware-specific fused operations. As shown in Table \ref{table:operations}, an address calculation followed by a memory access (i.e., \texttt{neura.load}) can be fused into a \texttt{neura.load\_indexed} operation. Similarly, a \texttt{neura.mul} followed by a \texttt{neura.add} can be replaced with a \texttt{neura.muladd} operation. This optimization reduces the number of DFG nodes, shortens critical paths, and enables more efficient mapping.

\begin{wrapfigure}{r}{0.44\textwidth}
  \vspace*{-0.3\baselineskip}
  \centering
  \includegraphics[width=0.44\textwidth]{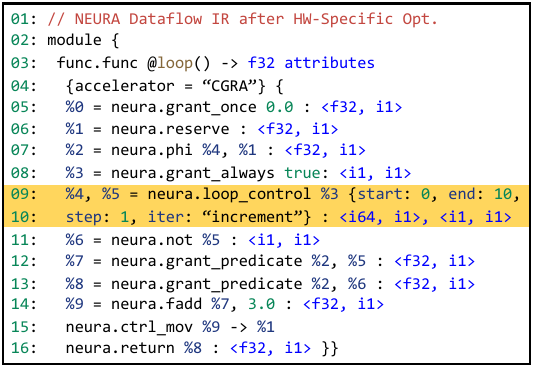}
  \caption{NEURA Dataflow IR of Fig. \ref{fig:end2end-example}(f) after applying Loop Streaming Optimization.}
  \label{fig:hw-specific}
  \vspace*{-0.5\baselineskip}
\end{wrapfigure}

\textbf{Loop Streaming Optimization.} As shown in Fig. \ref{fig:end2end-example}(f), while the general NEURA Dataflow IR represents loop control logic using \texttt{phi}, \texttt{add}, and \texttt{icmp} operations, this brings inter-iteration dependencies that may bottleneck the mapping II \cite{HyCUBE-DAC2017}. For CGRAs supporting loop stream operations, our \texttt{fuse-loop-control} pass recognizes static-bounded loop patterns and replaces the loop control logic with a \texttt{neura.loop\_control} operation, as shown in Fig. \ref{fig:hw-specific}. This new operation encapsulates the loop control logic (e.g., index update and boundary check) into a single loop streaming FU. This optimization breaks the inter-iteration bottleneck in the DFG, enabling more aggressive pipelining and achieving higher hardware utilization for common loop structures.

%% file: Ch7-Implementation.tex
\section{Implementation}\label{sec7:implementation}
The NEURA framework is implemented in 15K lines of C++ code on top of the MLIR infrastructure. We leverage MLIR's modularity and its powerful features for dialect definition, transformation, and optimization. This section details the implementation of NEURA's key components.

\textbf{Frontends.} To ensure broad applicability, NEURA compilation flow starts from kernels lowered into standard MLIR dialects (e.g., \textit{llvm} and \textit{arith} dialects). This common starting point allows it to readily support various frontends, including C/C++ (via Clang \cite{Clang-2025Online} and Polygeist \cite{Polygeist-2021PACT}) and high-level IRs like the \textit{linalg} \cite{LinalgDialect-2025Online} and \textit{tensor} \cite{TensorDialect-2025Online} dialects. The framework is designed for future extension, with plans to incorporate support for PyTorch \cite{PyTorch-2024ASPLOS} via Torch-MLIR \cite{Torch-MLIR-2025Online}.

\textbf{Core Compiler Infrastructure.} The core of NEURA is a comprehensive dialect defined in MLIR, defining operations for both the NEURA CDFG IR and Dataflow IR (Sec. \ref{sec4:ir}). The transformation logic (Sec. \ref{sec5:transform}) is implemented as a series of MLIR conversion passes, and hardware-agnostic and hardware-specific optimizations (Sec. \ref{sec6:optimization}) are built upon MLIR's pattern-rewriting mechanism. This use of pattern-rewriting enables a highly extensible optimization pipeline, allowing users to easily define and integrate new optimization patterns without modifying the core compiler infrastructure.

\textbf{IR Validation and Backend Support.} NEURA includes backend components to verify IR correctness and target CGRAs. For functional verification before hardware deployment or simulation, we have implemented an \textbf{interpreter} that interprets the NEURA Dataflow IR in a dataflow-driven manner, validating the correctness of the dataflow IR. Additionally, a heuristic-based mapper, adapted from OpenCGRA \cite{OpenCGRA-ICCD2020}, maps the NEURA Dataflow IR onto the CGRAs. This mapper accepts target architecture specifications, supporting both \textit{\textbf{spatial-only}} and \textit{\textbf{spatio-temporal}} execution models, and aims to find a valid mapping result with the minimum II. A flexible API is also provided to allow for the integration of diverse mapping algorithms.

%% file: Ch8-Evaluation.tex
\section{Experiments}\label{sec9:evaluation}
We present a comprehensive evaluation to demonstrate the effectiveness of NEURA. Our evaluation first quantifies the low area overhead of our ISA extensions (Sec. \ref{sec8.15:hardware-overhead}). It then demonstrates NEURA's SOTA performance on a high-performance \textbf{\textit{spatio-temporal}} architecture (Sec. \ref{sec8.2:control-eval}) and a competitive solution on low-power \textbf{\textit{spatial-only}} architecture (Sec. \ref{sec8.25:low-power}). We analyze the impact of hw-agnostic/-specific optimizations (Sec. \ref{sec8.3:optimization-expr}) and show the scalability of NEURA Dataflow IR (Sec. \ref{sec8.4:scalability}). NEURA's ability to support both \textbf{\textit{spatio-temporal}} and \textbf{\textit{spatial-only}} execution models, combined with its extensibility for diverse microarchitectural features, validates its \textbf{retargetability}.

\subsection{Experiment Settings}\label{sec8.1:expr-setting}
\textbf{Evaluation Architectures.} We develop two prototype CGRAs in RTL \cite{OpenCGRA-ICCD2020, VecPAC-2023ICCAD} designed to execute the general NEURA Dataflow IR (i.e., w/o specialized FU). These architectures, referred to as NEURA-SO and NEURA-ST, support the \textbf{\textit{spatial-only}} and \textbf{\textit{spatio-temporal}} execution models, respectively. Both are designed as typical CGRAs \cite{VecPAC-2023ICCAD, OpenCGRA-ICCD2020}, featuring a grid of tiles interconnected by a King Mesh NoC \cite{AURORA-2021DATE} and incorporating only the minimal ISA extensions required by Sec. \ref{sec4.3:interface}.

\textbf{Baselines.} We benchmark NEURA against three SOTA frameworks, each representing one of the primary control flow handling strategies discussed in Sec. \ref{subsec2.2:controlflow}: Marionette \cite{Marionette-MICRO2023} (The CDFG Strategy), RipTide \cite{RipTide-MICRO2022} (The Steering Control Strategy), and ICED \cite{ICED-MICRO2024} (The Limited Predication Strategy). For a fair comparison, ICED's dynamic power management features have been removed. To justify that the observed differences stem from IR contributions, rather than implementation-specific details (e.g., operating frequencies), we normalize all evaluated architectures to operate at 800MHz.

\textbf{Benchmarks.} We collect a diverse suite of kernel-level benchmarks from PolyBench \cite{PolyBench-2014IMPACT}, MachSuite \cite{MachSuite-2014IISWC}, CGRA-Bench \cite{OpenCGRA-ICCD2020}, and CHStone \cite{CHStone-IMT2009} that offer a wide spectrum of control flow features (e.g., nested loops, branches in loops). As summarized in Table \ref{table:benchmarks}, these benchmarks are categorized into four application domains to systematically evaluate performance on different computational patterns. In addition to these isolated kernels, we evaluate application-level performance using two real-world applications composed of multiple interacting kernels: a 2-layer Graph Convolutional Network (GCN) derived from PyTorch-Geometric \cite{PyTorch-Geomatric-axiv2019}, and Lower-Upper (LU) Decomposition sourced from CGRA-Bench \cite{OpenCGRA-ICCD2020}.

\begin{wraptable}{r}{0.679\textwidth}
\vspace*{-1.0\baselineskip}
\caption{Evaluation Benchmarks}
\label{table:benchmarks}
\centering
\vspace*{-0.5\baselineskip}
\resizebox{\linewidth}{!}{ 
\begin{tabular}{cccc}
\toprule
\textbf{\multirow{2}{*}{\makecell[c]{Domain/\\Application}}} &\textbf{\multirow{2}{*}{\makecell[c]{Kernel}}} & \multicolumn{2}{c}{\textbf{Control Flow Feature}} \\
\cmidrule[\heavyrulewidth](lr){3-4}
 & & \textbf{Loop Control} & \textbf{Branch Divergence}\\
\toprule

\multirow{3}{*}{\textbf{\makecell[c]{Machine\\Learning}}} & conv & Imperfect Nested & N/A\\

 & relu & Simple Loop & Two Branches in Loop\\

 & spmv & Imperfect Nested & N/A\\

\midrule

\multirow{3}{*}{\textbf{\makecell[c]{Linear\\Algebra}}} & gemm & Imperfect Nested & N/A \\

 & bicg & Imperfect Nested & N/A \\

 & mvt & Perfect Nested & N/A \\

\midrule

\multirow{5}{*}{\textbf{\makecell[c]{Signal\\Processing}}} & adpcm & Simple Loop & Four Branches in Loop \\

& dtw & Perfect Nested & Four Branches in Loop \\

& jacobi & Imperfect Nested, Serial Loops & N/A \\

 & fft & Imperfect Nested & N/A \\

 & merge-sort & Perfect Nested & Loop in Nested Branches \\

\midrule
\multirow{3}{*}{\textbf{\makecell[c]{Graph\\Algorithm}}} & dijkstra & Imperfect Nested & Three Branches (Two Nested) in Loop\\

& bfs & Imperfect Nested & One Branch in Loop\\
 & floyd & Perfect Nested & Two Branches in Loop\\

\midrule
\multirow{5}{*}{\textbf{\makecell[c]{2-Layer Graph \\Convolutional\\Network\\(GCN)}}} & compress & Perfect Nested & One Branch in Loop\\
& aggregate ($\times$2) & Perfect Nested & N/A \\
& combine & Perfect Nested & N/A \\
& combRelu & Perfect Nested & One Branch in Loop \\
& pooling & Perfect Nested & N/A \\

\midrule
\multirow{6}{*}{\textbf{\makecell[c]{Lower-Upper\\Decomposition\\(LU)}}} & init & Simple Loop & N/A\\
& decompose & Imperfect Nested & N/A\\
& solver0 & Simple Loop & Two Branches in Loop \\
& solver1 & Simple Loop & Two Branches in Loop \\
& invert & Imperfect Nested & One Branch in Loop \\
& determinant & Simple Loop & Nested Branch in Loop \\

\bottomrule
\end{tabular}
}
\vspace*{-1.0\baselineskip}
\end{wraptable}

\textbf{Comparison Methodology.} To fairly compare strategies for bridging the \textbf{\textit{control-dataflow semantic gap}}, we use the specific compiler provided by each framework to generate its specific kernel representation. All kernel representations are then processed by the same mapping algorithm \cite{OpenCGRA-ICCD2020} across all evaluated architectures to isolate the impact of different mapping algorithms, a process that typically completes in a few minutes. All architectures are normalized to a $6\times6$ fabric size to mitigate topological differences in mapping complexity. As typical CGRAs are statically configured, evaluation metrics (e.g., speedup, instructions-per-cycle) are known at compilation time. We can precisely calculate these metrics from the mapping result. We construct cycle-accurate simulators that account for host-CGRA communication for each architecture based on its specifications. We synthesize the RTL implementation of each architecture using Synopsys Design Compiler with the TSMC 22nm ULL library at 800MHz to get area for performance-per-area evaluation (see Sec. \ref{sec8.2:control-eval}).

\begin{wrapfigure}{r}{0.23\textwidth}
  \vspace*{-1.0\baselineskip}
  \centering
  \includegraphics[width=0.23\textwidth]{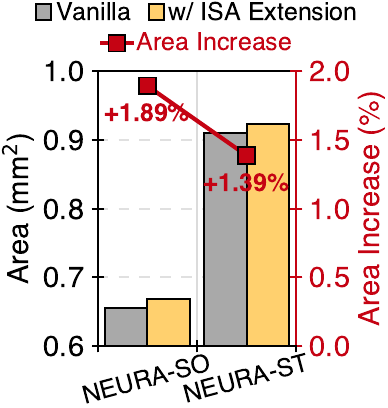}
  \vspace*{-1.3\baselineskip}
  \caption{Area Overhead of NEURA ISA Extensions.}
  \label{fig:area-compare}
  \vspace*{-0.7\baselineskip}
\end{wrapfigure}
\textbf{Evaluation Methodology for NEURA Acceleration Granularity.} NEURA's ability to represent complex control flow as a unified DFG provides the flexibility to choose the acceleration granularity --- ranging from a single inner loop to the entire kernel. Choosing the optimal granularity is complex, as offloading the entire kernel may not always yield optimal performance (e.g., outer loops with very few iterations prevent effective amortization of initialization overheads). While automatically determining this optimal granularity is a research problem beyond this paper's scope, we adopt an iterative methodology for the evaluation. We compile and simulate each possible granularity (from the innermost loop to the entire kernel) and select the one yielding the lowest total kernel execution latency. Critically, only NEURA possesses this flexibility in our evaluation. Although RipTide can also flatten complex kernels into a DFG, its rigid \textbf{\textit{spatial-only}} model and fixed array size constrain it to executing only the innermost loops in our evaluation.

\subsection{Hardware Overhead Analysis}\label{sec8.15:hardware-overhead}
To show the hardware overhead of ISA extensions required by Sec. \ref{sec4.3:interface}, we synthesize $6\times 6$ NEURA-enabled architectures (NEURA-SO and NEURA-ST) and compare them to corresponding vanilla baselines (without ISA extensions). As shown in Fig. \ref{fig:area-compare}, the area overhead is negligible: only $1.89\%$ for NEURA-SO and $1.39\%$ for NEURA-ST. Furthermore, by bundling and routing the predicate bit alongside the existing data payload, this design incurs negligible routing overhead. Consequently, both architectures maintain the vanilla baselines' 800MHz operating frequency without degradation. This low cost demonstrates that the ISA extensions to support NEURA's predicated type system and specialized instructions are highly efficient.

\subsection{NEURA-ST Outperforms SOTA Control Flow Handling Strategies}\label{sec8.2:control-eval}
To demonstrate a high-performance solution for handling control flow on CGRAs, we evaluate NEURA-ST against the three SOTA baselines across performance (Speedup), Instructions Per Cycle (IPC), and area efficiency (Performance Per Area). Results reveal NEURA-ST consistently outperforms all baselines in performance and achieves high improvements in area efficiency.

\textbf{Performance.} As shown in Fig. \ref{fig:speedup-compare}, NEURA-ST achieves geometric mean (geomean) speedups of $2.20\times$ over Marionette, $2.24\times$ over ICED, and $2.42\times$ over RipTide. These gains stem directly from NEURA's ability to generate a single, unified DFG that holistically represents the kernel. This unification eliminates the serialization bottlenecks inherent in Marionette's dedicated controller approach, which sequentially executes BBs and incurs high reconfiguration latency. NEURA's ability to handle hierarchical predicates enables the flattening of complex nested control flow, achieving a higher $2.50\times$ geomean speedup over ICED on benchmarks with hierarchical nested control flows (i.e., \texttt{relu}, \texttt{adpcm}, \texttt{dtw}, \texttt{merge-sort}, \texttt{dijkstra}, \texttt{bfs}, and \texttt{floyd}). This is because ICED can only predicate the inner branch on these benchmarks and reverts to sequential loop iteration execution, resulting in a higher performance gap. The NEURA Dataflow IR is decoupled from specific execution models, allowing it to leverage the flexible \textbf{\textit{spatio-temporal}} execution model to route long data dependencies both temporally (across cycles) and spatially. This yields a geomean speedup of $3.59\times$ over RipTide on benchmarks with long data dependencies (i.e., \texttt{fft}, \texttt{merge-sort}, \texttt{dijkstra}, \texttt{bfs}, \texttt{floyd}), as RipTide's rigid \textbf{\textit{spatial-only}} execution model can only map these dependencies as long, complex spatial routes, increasing the critical path delay.

\begin{figure}[t]
    \centering
    \includegraphics[width=1.0\textwidth]{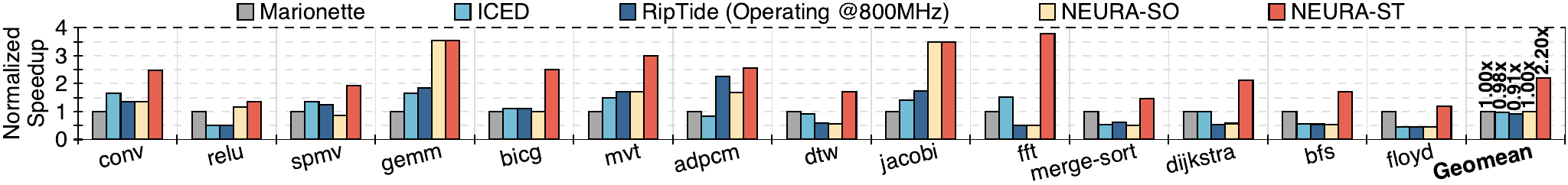}
    \vspace*{-1.4\baselineskip}
    \caption{Performance Comparison --- All results are normalized to the speedup of Marionette. The rightmost group of bars represents the geometric mean (geomean) across all benchmarks.}
    \label{fig:speedup-compare}
\end{figure}

\begin{figure}[t]
    \centering
    \includegraphics[width=1.0\textwidth]{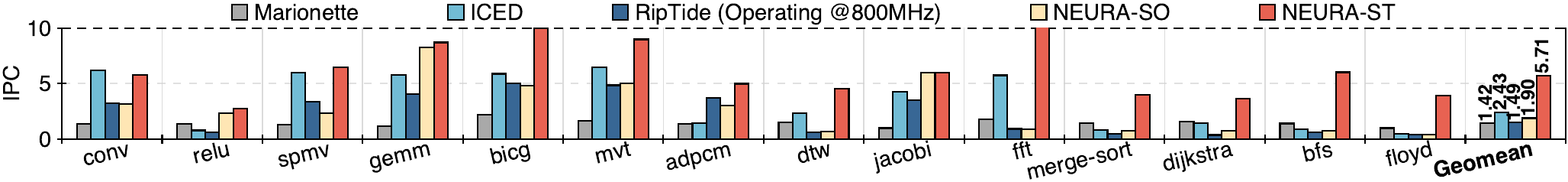}
    \vspace*{-1.4\baselineskip}
    \caption{Instructions Per Cycle (IPC) Comparison --- IPC achieved by NEURA architectures and baselines. A higher IPC indicates more effective exploitation of instruction-level parallelism (ILP).}
    \label{fig:ipc-compare}
\end{figure}

\begin{figure}[t]
    \centering
    \includegraphics[width=1.0\textwidth]{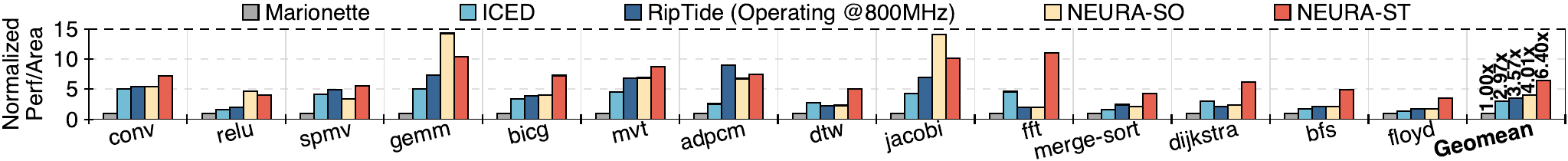}
    \vspace*{-1.4\baselineskip}
    \caption{Performance Per Area (Perf/Area) Comparison --- All results are normalized to the Perf/Area of Marionette. It illustrates the area efficiency of each architecture.}
    \label{fig:ppa-compare}
    \vspace*{-0.65\baselineskip}
\end{figure}

\textbf{Instructions Per Cycle (IPC).} Fig. \ref{fig:ipc-compare} illustrates the IPC, defined as total tile executions divided by total execution latency. NEURA-ST achieves higher geomean IPC over baselines. NEURA's unified DFG fundamentally eliminates the severe control-induced idle cycles and inter-BB reconfiguration stalls inherent in Marionette, resulting in low total execution latency. The comparison with ICED reveals a critical insight. Although ICED shows high IPC on some benchmarks (e.g., \texttt{conv}, \texttt{spmv}), its performance speedup trails NEURA-ST. This paradox is because ICED's kernel IR is not amenable to critical hardware-agnostic optimizations (e.g., data type alignment). This IR limitation leaves it unable to eliminate redundant operations, inflating its total tile executions and harming performance. NEURA supports these optimizations, producing a more efficient IR. Finally, RipTide's IPC collapses on benchmarks with long data dependencies. This further validates our performance analysis --- its rigid \textbf{\textit{spatial-only}} execution model creates long, static routes that drastically inflate the total execution latency without adding proportional computational work, destroying the IPC.


\textbf{Performance Per Area.} Fig. \ref{fig:ppa-compare} shows the performance per area (Perf/Area) comparison normalized to Marionette. NEURA-ST achieves a remarkable $6.40\times$ geomean Perf/Area over Marionette. This exceptional efficiency stems from NEURA's compiler-centric design. NEURA achieves its high performance using only minimal ISA extensions, avoiding the significant area overhead of Marionette's dedicated in-tile control hardware. NEURA-ST's $2.15\times$ geomean Perf/Area over ICED further highlights the value of NEURA Dataflow IR. With comparable hardware overhead (NEURA-ST $0.92\text{mm}^2$, ICED $0.88\text{mm}^2$), NEURA fully exploits the available hardware parallelism by resolving hierarchical predicates, achieving high area efficiency. Finally, NEURA-ST surpasses RipTide by $1.79\times$ by achieving a superior design tradeoff. While RipTide's low-power design is area-efficient, it achieves this by sacrificing performance with its rigid \textbf{\textit{spatial-only}} execution model. NEURA-ST delivers a much higher performance while maintaining excellent area efficiency.


\subsection{NEURA-SO Provides a Competitive Solution for Low-Power CGRAs}\label{sec8.25:low-power}
Sec. \ref{sec8.2:control-eval} demonstrates NEURA's high-performance capabilities with the \textbf{\textit{spatio-temporal}} NEURA-ST architecture. NEURA's retargetability also extends to the low-power domain. To assess its effectiveness in this domain, we evaluate the NEURA Dataflow IR on our \textbf{\textit{spatial-only}} NEURA-SO architecture. We compare its performance and energy against RipTide, the SOTA low-power CGRA, to demonstrate that NEURA provides a competitive and general solution in the low-power domain.


\begin{wrapfigure}{r}{0.6\textwidth}
  \vspace*{-0.3\baselineskip}
  \centering
  \includegraphics[width=0.6\textwidth]{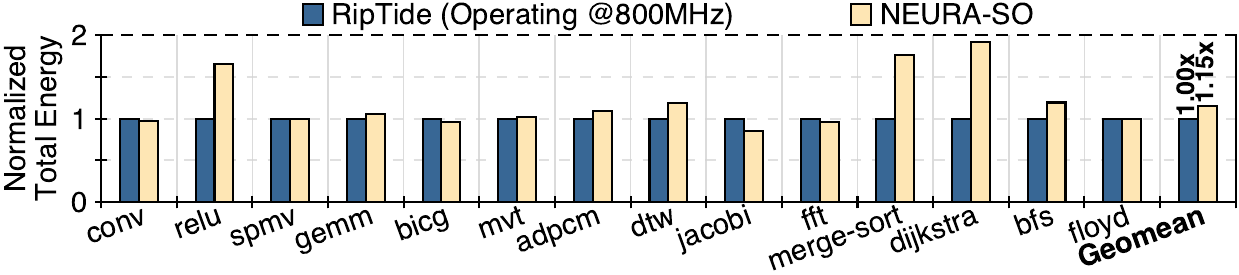}
  \caption{Energy Comparison --- Energy consumption of NEURA-SO normalized to RipTide (operating at 800MHz).}
  \label{fig:energy-compare}
  \vspace*{-0.3\baselineskip}
\end{wrapfigure}

\textbf{Performance, IPC, and Perf/Area.} As shown in Fig.\ref{fig:speedup-compare}, NEURA-SO delivers competitive performance, exceeding RipTide by $10\%$ on geomean. It also achieves geomean improvements of $28\%$ in IPC (see Fig. \ref{fig:ipc-compare}) and $12\%$ in Perf/Area (see Fig. \ref{fig:ppa-compare}) over RipTide. For benchmarks like \texttt{relu}, \texttt{gemm}, and \texttt{jacobi}, NEURA-SO significantly outperforms RipTide in performance. This occurs because RipTide's highly specialized ISA requires regular control patterns to be converted to match its specific steering-control primitives, which lengthens the critical path. Conversely, RipTide performs better on \texttt{spmv}, where its irregular memory access patterns align well 
with RipTide's specialized ISA.

\textbf{Energy Consumption.} Fig.\ref{fig:energy-compare} shows that NEURA-SO's energy consumption is competitive with RipTide across most benchmarks, with a geomean only $1.15\times$ higher. RipTide (operating at 800MHz) demonstrates lower energy consumption on specific benchmarks like \texttt{relu}, \texttt{merge-sort}, and \texttt{dijkstra}. This is attributable to RipTide's specialized \texttt{merge} operation, which efficiently fuses multi-input selection logic from the branches within these benchmarks' loops. In contrast, the general NEURA Dataflow IR, designed for portability, does not presume such specialized operations.


\textbf{Generality vs. Specialization.} The evaluation validates that the general NEURA Dataflow IR can achieve competitive performance and energy consumption against the SOTA low-power CGRA. The observed trade-offs in performance and energy underscore NEURA's design philosophy: relying on a small set of general operations provides strong baseline efficiency and avoids over-specialization. This general design is not a limitation. For scenarios demanding further optimization, NEURA's extensible framework can readily incorporate hardware-specific operations, via pattern rewriting in Sec. \ref{sec6.2:hw-specific}. Our open-source release already extends our IR with RipTide's specialized operations and implements the transformation from our general IR to these specific operations.


\subsection{Impact of NEURA Compiler Optimizations}\label{sec8.3:optimization-expr}

To quantify the impact of NEURA compiler optimizations presented in Sec. \ref{sec6:optimization}, we evaluate their cumulative contributions. The baseline is the unoptimized NEURA Dataflow IR executed on a $6\times 6$ NEURA-ST architecture augmented with specialized FUs supporting the \texttt{neura.load\_indexed} and \texttt{neura.loop\_control} operations (see Sec. \ref{sec6.2:hw-specific}). Fig. \ref{fig:opt-compare} shows the cumulative speedup as each optimization is applied.



\textbf{Hardware-agnostic optimizations} yield significant gains. \textbf{Data type alignment} and \textbf{constant folding} provide a $1.69\times$ geomean speedup over the baseline. This improvement arises because these optimizations eliminate redundant type conversions and avoid the overhead of materializing constants as DFG operations. Folding constants into attributes avoids additional predicate management logic, simplifies the data dependencies within the DFG, and frees up hardware resources.

\begin{wrapfigure}{r}{0.58\textwidth}
  \centering
  \includegraphics[width=0.58\textwidth]{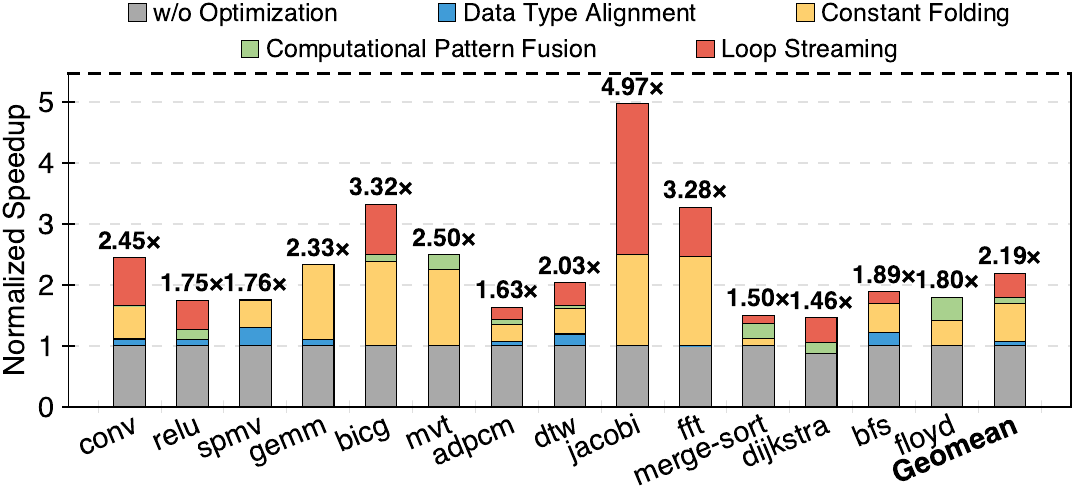}
  \vspace*{-1.2\baselineskip}
  \caption{Impact of NEURA Compiler Optimizations --- Cumulative speedup on an augmented $6\times6$ NEURA-ST normalized to source IR without optimization. Bars show the cumulative speedup after enabling optimizations.}
  \label{fig:opt-compare}
  \vspace*{-0.7\baselineskip}
\end{wrapfigure}

Further performance gains are contributed by \textbf{hardware-specific optimizations}. First, we apply the \textbf{computational pattern fusion}. For this evaluation, we enable the load fusion pattern to fuse address calculation and memory access into a single \texttt{neura.load\_indexed} operation. This brings the cumulative geomean speedup to $1.79\times$ over the baseline. Next, the \textbf{loop streaming optimization} delivers further gains, especially for benchmarks bottlenecked by inter-iteration dependencies in their loop control logic (e.g., \texttt{\texttt{jacobi}}). By fusing the loop control logic into a single \texttt{neura.loop\_control} operation, this optimization breaks the critical inter-iteration dependencies, enabling more aggressive pipelining. The cumulative effect of all optimizations results in a geomean speedup of $2.19\times$ over the baseline, demonstrating the efficacy and importance of NEURA's multi-level optimization strategy.

\subsection{NEURA Dataflow IR is Scalable}\label{sec8.4:scalability}
A key attribute of a versatile CGRA compilation framework is its ability to scale performance across different hardware fabric sizes. We conduct a scalability study to demonstrate that the NEURA Dataflow IR is not a bottleneck as hardware resources increase. Our initial analysis reveals that $6\times6$ NEURA-ST already saturates performance for most benchmarks due to inter-iteration data dependencies rather than resource limitations. Comparing against an even larger fabric would fail to reveal the IR's scalability. Therefore, we compare the performance of the same NEURA Dataflow IR on a smaller $4\times 4$ NEURA-ST against the base $6\times 6$ array. We exclude benchmarks whose performance is already saturated on the $4\times4$ array due to inter-iteration data dependencies rather than resource constraints.

\begin{wrapfigure}{r}{0.5\textwidth}
  \vspace*{-1.0\baselineskip}
  \centering
  \includegraphics[width=0.5\textwidth]{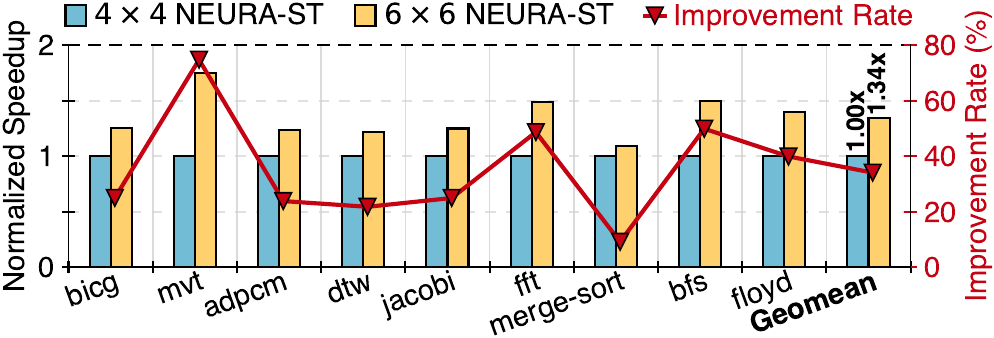}
  \vspace*{-1.3\baselineskip}
  \caption{Scalability Analysis --- Performance of running the same NEURA Dataflow IR on both $4\times4$ and $6\times6$ NEURA-ST architectures.}
  \label{fig:scalability-compare}
  \vspace*{-0.5\baselineskip}
\end{wrapfigure}
As shown in Fig. \ref{fig:scalability-compare}, scaling the architecture from $4\times4$ to $6\times6$ yields performance improvement across all tested benchmarks, resulting in a geomean speedup of $1.34\times$. This performance scaling is not perfectly linear, which is expected due to two inherent factors. First, the performance might saturate on the $4\times4$ array, limiting additional benefits from a larger array. Second, performance scaling is bounded by the theoretical minimum resource II, defined as the instruction count in the DFG divided by the tile count. For example, \texttt{floyd} contains 39 instructions. On a $4\times4$ (16-tile) array, its minimum resource II is $\lceil39/16\rceil=3$. Scaling to a $6\times6$ (36-tile) array only reduces the minimum resource II to $\lceil39/36\rceil=2$. This implies a theoretical maximum speedup of $1.5\times$. Considering these inherent factors, the achieved $1.34\times$ speedup validates the scalability of the NEURA Dataflow IR.

\subsection{NEURA Effectively Accelerates Real-World Applications}\label{sec8.28:application-expr}
To clearly illustrate the performance disparities across different architectures on real-world workloads, Fig. \ref{fig:application-perf} presents the execution time breakdown of the constituent kernels within the GCN and LU Decomposition applications. All results are normalized to the execution time of Marionette.

\textbf{Performance of NEURA-ST.} NEURA-ST achieves the lowest execution time among all evaluated architectures for both applications. Specifically, NEURA-ST achieves geomean speedup of $2.57\times$ and $2.71\times$ over all evaluated baselines (i.e., Marionette, ICED, and RipTide) on GCN and LU Decomposition, respectively. This performance validates the efficacy of NEURA's hierarchical predication, which seamlessly flattens complex nested loops and branches into a unified DFG. By doing so, NEURA-ST effectively manages the control overheads present in the CDFG Strategy (e.g., Marionette) and the loop-handling inefficiencies in the Limited Predication Strategy (e.g., ICED).

\begin{wrapfigure}{r}{0.63\textwidth}
  \vspace*{-0.4\baselineskip}
  \centering
  \includegraphics[width=0.63\textwidth]{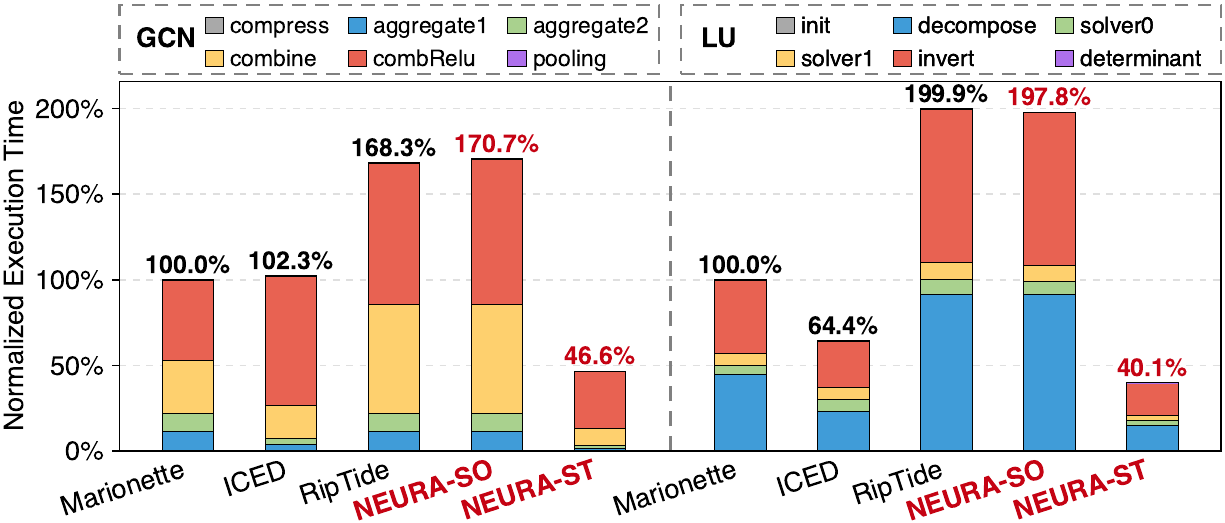}
  \vspace*{-0.8\baselineskip}
  \caption{Application Execution Time Breakdown --- Execution time breakdown of the constituent kernels for GCN and LU Decomposition. All results are normalized to the execution time of Marionette.}
  \label{fig:application-perf}
  \vspace*{-0.4\baselineskip}
\end{wrapfigure}

\textbf{Performance of NEURA-SO.} When targeting the \textit{\textbf{spatial-only}} execution model, NEURA-SO delivers performance comparable to the SOTA low-power CGRA baseline, RipTide, across both applications. While NEURA-SO and RipTide exhibit longer execution time than high-performance architectures like Marionette and NEURA-ST, this reflects an expected trade-off for low-power CGRAs that prioritize extreme energy efficiency over sheer speed. Achieving performance on par with RipTide, coupled with the substantial speedups delivered by NEURA-ST, demonstrates NEURA's capability to retarget applications to different execution models to meet different hardware constraints and performance demands.

%% file: Ch9-Related.tex
\section{Related Works}\label{sec9:related-works}
\textbf{Predicated Execution for CGRAs.} Predicated execution is a well-established technique aiming to eliminate control flow branches for high ILP \cite{Hyperblock-MICRO1992, FullPartialPredicate-1995ISCA, IfConversion-1983POPL, PredicateAnalysis-1996MICRO, PredicatedExecution-1991Tech}, employed across various architectures like CPUs and GPUs \cite{VLIWPredicate-1999ISCA, AMDGPUPredicate-2011CompArchNews, NVIDIAGPUPredicate-2025online}. Recognizing its potential, CGRAs have explored predicated execution, but existing approaches exhibit key limitations. One category of works focuses solely on handling intra-loop branch divergence (i.e., \texttt{if-else}) using predication \cite{BranchAware-2014DAC, AdvancedPredication-2010FPT, TLIA-2016TPDS, TIA-2015ICCAD,TRIPS-2004TACO,SARA-ISCA2021}. However, these solutions do not address the predication of the loop control logic, relying on external hardware to manage loop iterations. Another category attempts to use predication for simple loop control (i.e., single loop) \cite{OpenCGRA-ICCD2020, Morpher-2022WOSET, Plaid-ASPLOS2025, ICED-MICRO2024}. However, these solutions fail when faced with nested control structures (e.g., nested loops, branches in loops). They often revert to only predicating either the innermost loop or inner branch, failing to flatten the encompassing loop structure because they cannot represent the required hierarchical predicate context. Consequently, existing predicated execution solutions for CGRAs lack a mechanism to systematically unify predicate contexts originating from different control levels. NEURA addresses this limitation through its predicated type system and predicate management operations, providing the first mechanism to systematically represent and combine these hierarchical predicate contexts directly within a pure dataflow graph.

\textbf{IRs and Compilation Frameworks for CGRAs.} Compilation frameworks for CGRAs heavily rely on CDFG representations derived from imperative languages \cite{LLVM-CGO2004, MLIR-CGO2021, DFG-CACM1976, PDG-1987TOPLAS}. Existing frameworks typically follow one of three main strategies to manage the CDFG. The first category separates the CFG and DFG execution, mapping only DFGs onto the CGRA while managing the CFG externally via host CPUs or dedicated hardware \cite{SARA-ISCA2021, Marionette-MICRO2023, Fifer-2021MICRO, Stardust-2025CGO, SNAFU-ISCA2021, CGRA-ME-2017ASAP}. While Spatial \cite{Spatial-PLDI2018} can generate efficient DFG configurations for CGRAs, this separation fundamentally limits inter-DFG parallelism. The second category attempts to unify the representation into a single DFG. Steering control techniques \cite{SteerControl-PL1991, SpatialComp-2004ASPLOS, Ripple-PLDI2025, RipTide-MICRO2022, NUPEA-ISCA2025, DataflowProcessor-ISCA1974, WaveScalar-2003MICRO, MITTaged-1990TC} achieve full flattening and are widely adopted in the \textbf{\textit{spatial-only}} execution model. However, as discussed in Sec. \ref{subsec2.2:controlflow}, steering control is difficult to adapt to \textbf{\textit{spatio-temporal}} execution, sacrificing performance and architectural generality for complex control flows. The third category employs predicated execution \cite{ICED-MICRO2024, Plaid-ASPLOS2025, Morpher-2022WOSET, OpenCGRA-ICCD2020, ML-CGRA-2023DAC,SARA-ISCA2021,TRIPS-2004TACO}. They can handle simple branches, but fail to represent nested control flows, preventing the complete conversion of CDFGs. Consequently, existing frameworks lack a unified, pure dataflow IR capable of representing hierarchical control flow while remaining decoupled from specific execution models. NEURA addresses this critical gap with its NEURA Dataflow IR, which leverages a predicated type system to provide the first such unified representation, inherently supporting both \textbf{\textit{spatial-only}} and \textbf{\textit{spatio-temporal}} execution.

%% file: ref.bib
@String{Computing = "Computing" }

@String{Computer = "{IEEE} Computer" }

@String{Springer = "Springer-Verlag" }

@inproceedings{dark-silicon-ISCA2011,
author = {Esmaeilzadeh, Hadi and Blem, Emily and St. Amant, Renee and Sankaralingam, Karthikeyan and Burger, Doug},
title = {Dark silicon and the end of multicore scaling},
year = {2011},
isbn = {9781450304726},
publisher = {Association for Computing Machinery},
address = {New York, NY, USA},
url = {https://doi.org/10.1145/2000064.2000108},
doi = {10.1145/2000064.2000108},
booktitle = {Proceedings of the 38th Annual International Symposium on Computer Architecture},
pages = {365–376},
numpages = {12},
location = {San Jose, California, USA},
series = {ISCA '11}
}

@article{DSA-accelerators-CommunACM2020,
author = {Dally, William J. and Turakhia, Yatish and Han, Song},
title = {Domain-specific hardware accelerators},
year = {2020},
issue_date = {July 2020},
publisher = {Association for Computing Machinery},
address = {New York, NY, USA},
volume = {63},
number = {7},
issn = {0001-0782},
url = {https://doi.org/10.1145/3361682},
doi = {10.1145/3361682},
journal = {Commun. ACM},
month = jun,
pages = {48–57},
numpages = {10}
}

@ARTICLE{dark-silicon-server-Micro2011,
  author={Hardavellas, Nikos and Ferdman, Michael and Falsafi, Babak and Ailamaki, Anastasia},
  journal={IEEE Micro}, 
  title={Toward Dark Silicon in Servers}, 
  year={2011},
  volume={31},
  number={4},
  pages={6-15},
  doi={10.1109/MM.2011.77}}

@inproceedings{HiPER-ISCA2025,
author = {Ting, Justin and Kim, Minsik and Zhu, Junkang and Sheng, Haotian and Zhang, Zhengya},
title = {HiPER: Hierarchically-Composed Processing for Efficient Robot Learning-Based Control},
year = {2025},
isbn = {9798400712616},
publisher = {Association for Computing Machinery},
address = {New York, NY, USA},
url = {https://doi.org/10.1145/3695053.3731097},
doi = {10.1145/3695053.3731097},
booktitle = {Proceedings of the 52nd Annual International Symposium on Computer Architecture},
pages = {313–326},
numpages = {14},
location = {
},
series = {ISCA '25}
}

@inproceedings{Dadu-RBD-MICRO2023,
author = {Yang, Yuxin and Chen, Xiaoming and Han, Yinhe},
title = {Dadu-RBD: Robot Rigid Body Dynamics Accelerator with Multifunctional Pipelines},
year = {2023},
isbn = {9798400703294},
publisher = {Association for Computing Machinery},
address = {New York, NY, USA},
url = {https://doi.org/10.1145/3613424.3614298},
doi = {10.1145/3613424.3614298},
booktitle = {Proceedings of the 56th Annual IEEE/ACM International Symposium on Microarchitecture},
pages = {297–309},
numpages = {13},
location = {Toronto, ON, Canada},
series = {MICRO '23}
}

@INPROCEEDINGS{Cambricon-X-MICRO2016,
  author={Zhang, Shijin and Du, Zidong and Zhang, Lei and Lan, Huiying and Liu, Shaoli and Li, Ling and Guo, Qi and Chen, Tianshi and Chen, Yunji},
  booktitle={2016 49th Annual IEEE/ACM International Symposium on Microarchitecture (MICRO)}, 
  title={Cambricon-X: An accelerator for sparse neural networks}, 
  year={2016},
  volume={},
  number={},
  pages={1-12},
  doi={10.1109/MICRO.2016.7783723}}

@INPROCEEDINGS{dadiannao-MICRO2014,
  author={Chen, Yunji and Luo, Tao and Liu, Shaoli and Zhang, Shijin and He, Liqiang and Wang, Jia and Li, Ling and Chen, Tianshi and Xu, Zhiwei and Sun, Ninghui and Temam, Olivier},
  booktitle={2014 47th Annual IEEE/ACM International Symposium on Microarchitecture}, 
  title={DaDianNao: A Machine-Learning Supercomputer}, 
  year={2014},
  volume={},
  number={},
  pages={609-622},
  doi={10.1109/MICRO.2014.58}}

@ARTICLE{Amber-JSSC2024,
  author={Feng, Kathleen and Kong, Taeyoung and Koul, Kalhan and Melchert, Jackson and Carsello, Alex and Liu, Qiaoyi and Nyengele, Gedeon and Strange, Maxwell and Zhang, Keyi and Nayak, Ankita and Setter, Jeff and Thomas, James and Sreedhar, Kavya and Chen, Po-Han and Bhagdikar, Nikhil and Myers, Zach A. and D’Agostino, Brandon and Joshi, Pranil and Richardson, Stephen and Torng, Christopher and Horowitz, Mark and Raina, Priyanka},
  journal={IEEE Journal of Solid-State Circuits}, 
  title={Amber: A 16-nm System-on-Chip With a Coarse- Grained Reconfigurable Array for Flexible Acceleration of Dense Linear Algebra}, 
  year={2024},
  volume={59},
  number={3},
  pages={947-959},
  doi={10.1109/JSSC.2023.3313116}}

@INPROCEEDINGS{RipTide-MICRO2022,
  author={Gobieski, Graham and Ghosh, Souradip and Heule, Marijn and Mowry, Todd and Nowatzki, Tony and Beckmann, Nathan and Lucia, Brandon},
  booktitle={2022 55th IEEE/ACM International Symposium on Microarchitecture (MICRO)}, 
  title={RipTide: A Programmable, Energy-Minimal Dataflow Compiler and Architecture}, 
  year={2022},
  volume={},
  number={},
  pages={546-564},
  doi={10.1109/MICRO56248.2022.00046}}

@INPROCEEDINGS{REVEL-HPCA2020,
  author={Weng, Jian and Liu, Sihao and Wang, Zhengrong and Dadu, Vidushi and Nowatzki, Tony},
  booktitle={2020 IEEE International Symposium on High Performance Computer Architecture (HPCA)}, 
  title={A Hybrid Systolic-Dataflow Architecture for Inductive Matrix Algorithms}, 
  year={2020},
  volume={},
  number={},
  pages={703-716},
  doi={10.1109/HPCA47549.2020.00063}}

@INPROCEEDINGS{ICED-MICRO2024,
  author={Tan, Cheng and Jiang, Miaomiao and Patil, Deepak and Ou, Yanghui and Li, Zhaoying and Ju, Lei and Mitra, Tulika and Park, Hyunchul and Tumeo, Antonino and Zhang, Jeff},
  booktitle={2024 57th IEEE/ACM International Symposium on Microarchitecture (MICRO)}, 
  title={ICED: An Integrated CGRA Framework Enabling DVFS-Aware Acceleration}, 
  year={2024},
  volume={},
  number={},
  pages={1338-1352},
  doi={10.1109/MICRO61859.2024.00099}}

@article{Ripple-PLDI2025,
author = {Ghosh, Souradip and Shi, Yufei and Lucia, Brandon and Beckmann, Nathan},
title = {Ripple: Asynchronous Programming for Spatial Dataflow Architectures},
year = {2025},
issue_date = {June 2025},
publisher = {Association for Computing Machinery},
address = {New York, NY, USA},
volume = {9},
number = {PLDI},
url = {https://doi.org/10.1145/3729256},
doi = {10.1145/3729256},
journal = {Proc. ACM Program. Lang.},
month = jun,
articleno = {157},
numpages = {28}
}

@inproceedings{NUPEA-ISCA2025,
author = {Ghosh, Souradip and Gobieski, Graham and Zhang, Keyi and Lucia, Brandon and Beckmann, Nathan and Nowatzki, Tony},
title = {NUPEA: Optimizing Critical Loads on Spatial Dataflow Architectures via Non-Uniform Processing-Element Access},
year = {2025},
isbn = {9798400712616},
publisher = {Association for Computing Machinery},
address = {New York, NY, USA},
url = {https://doi.org/10.1145/3695053.3731061},
doi = {10.1145/3695053.3731061},
booktitle = {Proceedings of the 52nd Annual International Symposium on Computer Architecture},
pages = {1627–1640},
numpages = {14},
location = {
},
series = {ISCA '25}
}

@article{Plasticine-ISCA2017,
author = {Prabhakar, Raghu and Zhang, Yaqi and Koeplinger, David and Feldman, Matt and Zhao, Tian and Hadjis, Stefan and Pedram, Ardavan and Kozyrakis, Christos and Olukotun, Kunle},
title = {Plasticine: A Reconfigurable Architecture For Parallel Paterns},
year = {2017},
issue_date = {May 2017},
publisher = {Association for Computing Machinery},
address = {New York, NY, USA},
volume = {45},
number = {2},
issn = {0163-5964},
url = {https://doi.org/10.1145/3140659.3080256},
doi = {10.1145/3140659.3080256},
journal = {SIGARCH Comput. Archit. News},
month = jun,
pages = {389–402},
numpages = {14}
}

@INPROCEEDINGS{UE-CGRA-HPCA2021,
  author={Torng, Christopher and Pan, Peitian and Ou, Yanghui and Tan, Cheng and Batten, Christopher},
  booktitle={2021 IEEE International Symposium on High-Performance Computer Architecture (HPCA)}, 
  title={Ultra-Elastic CGRAs for Irregular Loop Specialization}, 
  year={2021},
  volume={},
  number={},
  pages={412-425},
  doi={10.1109/HPCA51647.2021.00042}}

@inproceedings{ModuloScheduling-MICRO1994,
author = {Rau, B. Ramakrishna},
title = {Iterative modulo scheduling: an algorithm for software pipelining loops},
year = {1994},
isbn = {0897917073},
publisher = {Association for Computing Machinery},
address = {New York, NY, USA},
url = {https://doi.org/10.1145/192724.192731},
doi = {10.1145/192724.192731},
booktitle = {Proceedings of the 27th Annual International Symposium on Microarchitecture},
pages = {63–74},
numpages = {12},
location = {San Jose, California, USA},
series = {MICRO 27}
}

@INPROCEEDINGS{SARA-ISCA2021,
  author={Zhang, Yaqi and Zhang, Nathan and Zhao, Tian and Vilim, Matt and Shahbaz, Muhammad and Olukotun, Kunle},
  booktitle={2021 ACM/IEEE 48th Annual International Symposium on Computer Architecture (ISCA)}, 
  title={SARA: Scaling a Reconfigurable Dataflow Accelerator}, 
  year={2021},
  volume={},
  number={},
  pages={1041-1054},
  doi={10.1109/ISCA52012.2021.00085}}

@inproceedings{Marionette-MICRO2023,
author = {Deng, Jinyi and Tang, Xinru and Zhang, Jiahao and Li, Yuxuan and Zhang, Linyun and Han, Boxiao and He, Hongjun and Tu, Fengbin and Liu, Leibo and Wei, Shaojun and Hu, Yang and Yin, Shouyi},
title = {Towards Efficient Control Flow Handling in Spatial Architecture via Architecting the Control Flow Plane},
year = {2023},
isbn = {9798400703294},
publisher = {Association for Computing Machinery},
address = {New York, NY, USA},
url = {https://doi.org/10.1145/3613424.3614246},
doi = {10.1145/3613424.3614246},
booktitle = {Proceedings of the 56th Annual IEEE/ACM International Symposium on Microarchitecture},
pages = {1395–1408},
numpages = {14},
location = {Toronto, ON, Canada},
series = {MICRO '23}
}

@article{SteerControl-PL1991,
author = {Cytron, Ron and Ferrante, Jeanne and Rosen, Barry K. and Wegman, Mark N. and Zadeck, F. Kenneth},
title = {Efficiently computing static single assignment form and the control dependence graph},
year = {1991},
issue_date = {Oct. 1991},
publisher = {Association for Computing Machinery},
address = {New York, NY, USA},
volume = {13},
number = {4},
issn = {0164-0925},
url = {https://doi.org/10.1145/115372.115320},
doi = {10.1145/115372.115320},
journal = {ACM Trans. Program. Lang. Syst.},
month = oct,
pages = {451–490},
numpages = {40}
}

@inproceedings{Hyperblock-MICRO1992,
author = {Mahlke, Scott A. and Lin, David C. and Chen, William Y. and Hank, Richard E. and Bringmann, Roger A.},
title = {Effective compiler support for predicated execution using the hyperblock},
year = {1992},
isbn = {0818631759},
publisher = {IEEE Computer Society Press},
address = {Washington, DC, USA},
booktitle = {Proceedings of the 25th Annual International Symposium on Microarchitecture},
pages = {45–54},
numpages = {10},
location = {Portland, Oregon, USA},
series = {MICRO 25}
}

@article{DFG-CACM1976,
author = {Allen, F. E. and Cocke, J.},
title = {A program data flow analysis procedure},
year = {1976},
issue_date = {March 1976},
publisher = {Association for Computing Machinery},
address = {New York, NY, USA},
volume = {19},
number = {3},
issn = {0001-0782},
url = {https://doi.org/10.1145/360018.360025},
doi = {10.1145/360018.360025},
journal = {Commun. ACM},
month = mar,
pages = {137},
numpages = {11}
}

@INPROCEEDINGS{DRESC-FPT2002,
  author={Bingfeng Mei and Vernalde, S. and Verkest, D. and De Man, H. and Lauwereins, R.},
  booktitle={2002 IEEE International Conference on Field-Programmable Technology, 2002. (FPT). Proceedings.}, 
  title={DRESC: a retargetable compiler for coarse-grained reconfigurable architectures}, 
  year={2002},
  volume={},
  number={},
  pages={166-173},
  doi={10.1109/FPT.2002.1188678}}

@INPROCEEDINGS{SNAFU-ISCA2021,
  author={Gobieski, Graham and Atli, Ahmet Oguz and Mai, Kenneth and Lucia, Brandon and Beckmann, Nathan},
  booktitle={2021 ACM/IEEE 48th Annual International Symposium on Computer Architecture (ISCA)}, 
  title={Snafu: An Ultra-Low-Power, Energy-Minimal CGRA-Generation Framework and Architecture}, 
  year={2021},
  volume={},
  number={},
  pages={1027-1040},
  doi={10.1109/ISCA52012.2021.00084}}

@ARTICLE{DySER-IEEEMicro2012,
  author={Govindaraju, Venkatraman and Ho, Chen-Han and Nowatzki, Tony and Chhugani, Jatin and Satish, Nadathur and Sankaralingam, Karthikeyan and Kim, Changkyu},
  journal={IEEE Micro}, 
  title={DySER: Unifying Functionality and Parallelism Specialization for Energy-Efficient Computing}, 
  year={2012},
  volume={32},
  number={5},
  pages={38-51},
  doi={10.1109/MM.2012.51}}

@inproceedings{Plaid-ASPLOS2025,
author = {Li, Zhaoying and Dangi, Pranav and Yin, Chenyang and Bandara, Thilini Kaushalya and Juneja, Rohan and Tan, Cheng and Bai, Zhenyu and Mitra, Tulika},
title = {Enhancing CGRA Efficiency Through Aligned Compute and Communication Provisioning},
year = {2025},
isbn = {9798400706981},
publisher = {Association for Computing Machinery},
address = {New York, NY, USA},
url = {https://doi.org/10.1145/3669940.3707230},
doi = {10.1145/3669940.3707230},
booktitle = {Proceedings of the 30th ACM International Conference on Architectural Support for Programming Languages and Operating Systems, Volume 1},
pages = {410–425},
numpages = {16},
location = {Rotterdam, Netherlands},
series = {ASPLOS '25}
}

@INPROCEEDINGS{DRIPS-HPCA2022,
  author={Tan, Cheng and Agostini, Nicolas Bohm and Geng, Tong and Xie, Chenhao and Li, Jiajia and Li, Ang and Barker, Kevin J. and Tumeo, Antonino},
  booktitle={2022 IEEE International Symposium on High-Performance Computer Architecture (HPCA)}, 
  title={DRIPS: Dynamic Rebalancing of Pipelined Streaming Applications on CGRAs}, 
  year={2022},
  volume={},
  number={},
  pages={304-316},
  doi={10.1109/HPCA53966.2022.00030}}

@INPROCEEDINGS{HyCUBE-DAC2017,
  author={Karunaratne, Manupa and Mohite, Aditi Kulkarni and Mitra, Tulika and Peh, Li-Shiuan},
  booktitle={2017 54th ACM/EDAC/IEEE Design Automation Conference (DAC)}, 
  title={HyCUBE: A CGRA with reconfigurable single-cycle multi-hop interconnect}, 
  year={2017},
  volume={},
  number={},
  pages={1-6},
  doi={10.1145/3061639.3062262}}

@inproceedings{ADRES-FPL2003,
  title={ADRES: An architecture with tightly coupled VLIW processor and coarse-grained reconfigurable matrix},
  author={Mei, Bingfeng and Vernalde, Serge and Verkest, Diederik and De Man, Hugo and Lauwereins, Rudy},
  booktitle={International conference on field programmable logic and applications},
  pages={61--70},
  year={2003},
  organization={Springer}
}

@INPROCEEDINGS{OpenCGRA-ICCD2020,
  author={Tan, Cheng and Xie, Chenhao and Li, Ang and Barker, Kevin J. and Tumeo, Antonino},
  booktitle={2020 IEEE 38th International Conference on Computer Design (ICCD)}, 
  title={OpenCGRA: An Open-Source Unified Framework for Modeling, Testing, and Evaluating CGRAs}, 
  year={2020},
  volume={},
  number={},
  pages={381-388},
  doi={10.1109/ICCD50377.2020.00070}}

@INPROCEEDINGS{CGRA-ME-2017ASAP,
  author={Chin, S. Alexander and Sakamoto, Noriaki and Rui, Allan and Zhao, Jim and Kim, Jin Hee and Hara-Azumi, Yuko and Anderson, Jason},
  booktitle={2017 IEEE 28th International Conference on Application-specific Systems, Architectures and Processors (ASAP)}, 
  title={CGRA-ME: A unified framework for CGRA modelling and exploration}, 
  year={2017},
  volume={},
  number={},
  pages={184-189},
  doi={10.1109/ASAP.2017.7995277}}

@INPROCEEDINGS{LLVM-CGO2004,
  author={Lattner, C. and Adve, V.},
  booktitle={International Symposium on Code Generation and Optimization, 2004. CGO 2004.}, 
  title={LLVM: a compilation framework for lifelong program analysis \& transformation}, 
  year={2004},
  volume={},
  number={},
  pages={75-86},
  doi={10.1109/CGO.2004.1281665}}

@inproceedings{DataflowProcessor-ISCA1974,
author = {Dennis, Jack B. and Misunas, David P.},
title = {A preliminary architecture for a basic data-flow processor},
year = {1974},
isbn = {9781450373661},
publisher = {Association for Computing Machinery},
address = {New York, NY, USA},
url = {https://doi.org/10.1145/642089.642111},
doi = {10.1145/642089.642111},
booktitle = {Proceedings of the 2nd Annual Symposium on Computer Architecture},
pages = {126–132},
numpages = {7},
series = {ISCA '75}
}

@inproceedings{Spatial-PLDI2018,
author = {Koeplinger, David and Feldman, Matthew and Prabhakar, Raghu and Zhang, Yaqi and Hadjis, Stefan and Fiszel, Ruben and Zhao, Tian and Nardi, Luigi and Pedram, Ardavan and Kozyrakis, Christos and Olukotun, Kunle},
title = {Spatial: a language and compiler for application accelerators},
year = {2018},
isbn = {9781450356985},
publisher = {Association for Computing Machinery},
address = {New York, NY, USA},
url = {https://doi.org/10.1145/3192366.3192379},
doi = {10.1145/3192366.3192379},
booktitle = {Proceedings of the 39th ACM SIGPLAN Conference on Programming Language Design and Implementation},
pages = {296–311},
numpages = {16},
location = {Philadelphia, PA, USA},
series = {PLDI 2018}
}

@INPROCEEDINGS{MLIR-CGO2021,
  author={Lattner, Chris and Amini, Mehdi and Bondhugula, Uday and Cohen, Albert and Davis, Andy and Pienaar, Jacques and Riddle, River and Shpeisman, Tatiana and Vasilache, Nicolas and Zinenko, Oleksandr},
  booktitle={2021 IEEE/ACM International Symposium on Code Generation and Optimization (CGO)}, 
  title={MLIR: Scaling Compiler Infrastructure for Domain Specific Computation}, 
  year={2021},
  volume={},
  number={},
  pages={2-14},
  doi={10.1109/CGO51591.2021.9370308}}

@misc{LLVMDialect-2025Online,
  title        = {'llvm' Dialect},
  author       = {'llvm' Dialect},
  year         = 2025,
  url          = {https://mlir.llvm.org/docs/Dialects/LLVM/#},
  note         = {Accessed on Sept 15, 2025}
}

@misc{arithDialect-2025Online,
  title        = {'arith' Dialect},
  author       = {'arith' Dialect},
  year         = 2025,
  url          = {https://mlir.llvm.org/docs/Dialects/ArithOps/},
  note         = {Accessed on Sept 15, 2025}
}

@misc{LinalgDialect-2025Online,
  title        = {'linalg' Dialect},
  author       = {'linalg' Dialect},
  year         = 2025,
  url          = {https://mlir.llvm.org/docs/Dialects/Linalg/},
  note         = {Accessed on Nov 10, 2025}
}

@misc{TensorDialect-2025Online,
  title        = {'tensor' Dialect},
  author       = {'tensor' Dialect},
  year         = 2025,
  url          = {https://mlir.llvm.org/docs/Dialects/TensorOps/},
  note         = {Accessed on Nov 10, 2025}
}

@misc{memrefDialect-2025Online,
  title        = {'memref' Dialect},
  author       = {'memref' Dialect},
  year         = 2025,
  url          = {https://mlir.llvm.org/docs/Dialects/MemRef/},
  note         = {Accessed on Sept 20, 2025}
}

@INPROCEEDINGS{Polygeist-2021PACT,
  author={Moses, William S. and Chelini, Lorenzo and Zhao, Ruizhe and Zinenko, Oleksandr},
  booktitle={2021 30th International Conference on Parallel Architectures and Compilation Techniques (PACT)}, 
  title={Polygeist: Raising C to Polyhedral MLIR}, 
  year={2021},
  volume={},
  number={},
  pages={45-59},
  doi={10.1109/PACT52795.2021.00011}}

@misc{Torch-MLIR-2025Online,
  title        = {Torch-MLIR},
  author       = {Torch-MLIR},
  year         = 2025,
  url          = {https://github.com/llvm/torch-mlir},
  note         = {Accessed on Sept 15, 2025}
}

@misc{Clang-2025Online,
  title        = {Clang: a C language family frontend for LLVM},
  author       = {Clang},
  year         = 2025,
  url          = {https://clang.llvm.org},
  note         = {Accessed on Sept 15, 2025}
}

@misc{LLVMConstant-2025Online,
  title        = {'llvm' Dialect Constant Operation},
  author       = {'llvm' Dialect},
  year         = 2025,
  url          = {https://mlir.llvm.org/docs/Dialects/LLVM/#llvmmlirconstant-llvmconstantop},
  note         = {Accessed on Oct 21, 2025}
}

@INPROCEEDINGS{MachSuite-2014IISWC,
  author={Reagen, Brandon and Adolf, Robert and Shao, Yakun Sophia and Wei, Gu-Yeon and Brooks, David},
  booktitle={2014 IEEE International Symposium on Workload Characterization (IISWC)}, 
  title={MachSuite: Benchmarks for accelerator design and customized architectures}, 
  year={2014},
  volume={},
  number={},
  pages={110-119},
  doi={10.1109/IISWC.2014.6983050}}

@inproceedings{PolyBench-2014IMPACT,
  title={Understanding polybench/c 3.2 kernels},
  author={Yuki, Tomofumi},
  booktitle={International workshop on polyhedral compilation techniques (IMPACT)},
  pages={1--5},
  year={2014}
}

@INPROCEEDINGS{VecPAC-2023ICCAD,
  author={Tan, Cheng and Patil, Deepak and Tumeo, Antonino and Weisz, Gabriel and Reinhardt, Steve and Zhang, Jeff},
  booktitle={2023 IEEE/ACM International Conference on Computer Aided Design (ICCAD)}, 
  title={VecPAC: A Vectorizable and Precision-Aware CGRA}, 
  year={2023},
  volume={},
  number={},
  pages={1-9},
  doi={10.1109/ICCAD57390.2023.10323910}}

@INPROCEEDINGS{ML-CGRA-2023DAC,
  author={Luo, Yixuan and Tan, Cheng and Agostini, Nicolas Bohm and Li, Ang and Tumeo, Antonino and Dave, Nirav and Geng, Tong},
  booktitle={2023 60th ACM/IEEE Design Automation Conference (DAC)}, 
  title={ML-CGRA: An Integrated Compilation Framework to Enable Efficient Machine Learning Acceleration on CGRAs}, 
  year={2023},
  volume={},
  number={},
  pages={1-6},
  doi={10.1109/DAC56929.2023.10247873}}

@inproceedings{PyTorch-2024ASPLOS,
author = {Ansel, Jason and Yang, Edward and He, Horace and Gimelshein, Natalia and Jain, Animesh and Voznesensky, Michael and Bao, Bin and Bell, Peter and Berard, David and Burovski, Evgeni and Chauhan, Geeta and Chourdia, Anjali and Constable, Will and Desmaison, Alban and DeVito, Zachary and Ellison, Elias and Feng, Will and Gong, Jiong and Gschwind, Michael and Hirsh, Brian and Huang, Sherlock and Kalambarkar, Kshiteej and Kirsch, Laurent and Lazos, Michael and Lezcano, Mario and Liang, Yanbo and Liang, Jason and Lu, Yinghai and Luk, C. K. and Maher, Bert and Pan, Yunjie and Puhrsch, Christian and Reso, Matthias and Saroufim, Mark and Siraichi, Marcos Yukio and Suk, Helen and Zhang, Shunting and Suo, Michael and Tillet, Phil and Zhao, Xu and Wang, Eikan and Zhou, Keren and Zou, Richard and Wang, Xiaodong and Mathews, Ajit and Wen, William and Chanan, Gregory and Wu, Peng and Chintala, Soumith},
title = {PyTorch 2: Faster Machine Learning Through Dynamic Python Bytecode Transformation and Graph Compilation},
year = {2024},
isbn = {9798400703850},
publisher = {Association for Computing Machinery},
address = {New York, NY, USA},
url = {https://doi.org/10.1145/3620665.3640366},
doi = {10.1145/3620665.3640366},
booktitle = {Proceedings of the 29th ACM International Conference on Architectural Support for Programming Languages and Operating Systems, Volume 2},
pages = {929–947},
numpages = {19},
location = {La Jolla, CA, USA},
series = {ASPLOS '24}
}

@ARTICLE{MITTaged-1990TC,
  author={Arvind and Nikhil, R.S.},
  journal={IEEE Transactions on Computers}, 
  title={Executing a program on the MIT tagged-token dataflow architecture}, 
  year={1990},
  volume={39},
  number={3},
  pages={300-318},
  doi={10.1109/12.48862}}

@inproceedings{Morpher-2022WOSET,
  title={Morpher: An open-source integrated compilation and simulation framework for cgra},
  author={Wijerathne, Dhananjaya and Li, Zhaoying and Karunaratne, Manupa and Peh, Li-Shiuan and Mitra, Tulika},
  booktitle={Fifth Workshop on Open-Source EDA Technology (WOSET)},
  year={2022}
}

@INPROCEEDINGS{BranchAware-2014DAC,
  author={Hamzeh, Mahdi and Shrivastava, Aviral and Vrudhula, Sarma},
  booktitle={2014 51st ACM/EDAC/IEEE Design Automation Conference (DAC)}, 
  title={Branch-aware loop mapping on CGRAs}, 
  year={2014},
  volume={},
  number={},
  pages={1-6},
  doi={}}

@article{PDG-1987TOPLAS,
  title={The program dependence graph and its use in optimization},
  author={Ferrante, Jeanne and Ottenstein, Karl J and Warren, Joe D},
  journal={ACM Transactions on Programming Languages and Systems (TOPLAS)},
  volume={9},
  number={3},
  pages={319--349},
  year={1987},
  publisher={ACM New York, NY, USA}
}

@inproceedings{Fifer-2021MICRO,
author = {Nguyen, Quan M. and Sanchez, Daniel},
title = {Fifer: Practical Acceleration of Irregular Applications on Reconfigurable Architectures},
year = {2021},
isbn = {9781450385572},
publisher = {Association for Computing Machinery},
address = {New York, NY, USA},
url = {https://doi.org/10.1145/3466752.3480048},
doi = {10.1145/3466752.3480048},
booktitle = {MICRO-54: 54th Annual IEEE/ACM International Symposium on Microarchitecture},
pages = {1064–1077},
numpages = {14},
location = {Virtual Event, Greece},
series = {MICRO '21}
}

@inproceedings{Stardust-2025CGO,
author = {Hsu, Olivia and Rucker, Alexander and Zhao, Tian and Desai, Varun and Olukotun, Kunle and Kjolstad, Fredrik},
title = {Stardust: Compiling Sparse Tensor Algebra to a Reconfigurable Dataflow Architecture},
year = {2025},
isbn = {9798400712753},
publisher = {Association for Computing Machinery},
address = {New York, NY, USA},
url = {https://doi.org/10.1145/3696443.3708918},
doi = {10.1145/3696443.3708918},
booktitle = {Proceedings of the 23rd ACM/IEEE International Symposium on Code Generation and Optimization},
pages = {628–643},
numpages = {16},
location = {Las Vegas, NV, USA},
series = {CGO '25}
}

@inproceedings{SpatialComp-2004ASPLOS,
  title={Spatial computation},
  author={Budiu, Mihai and Venkataramani, Girish and Chelcea, Tiberiu and Goldstein, Seth Copen},
  booktitle={Proceedings of the 11th international conference on Architectural support for programming languages and operating systems},
  pages={14--26},
  year={2004}
}

@inproceedings{WaveScalar-2003MICRO,
  title={WaveScalar},
  author={Swanson, Steven and Michelson, Ken and Schwerin, Andrew and Oskin, Mark},
  booktitle={Proceedings. 36th Annual IEEE/ACM International Symposium on Microarchitecture, 2003. MICRO-36.},
  pages={291--302},
  year={2003},
  organization={IEEE}
}

@inproceedings{IfConversion-1983POPL,
author = {Allen, J. R. and Kennedy, Ken and Porterfield, Carrie and Warren, Joe},
title = {Conversion of control dependence to data dependence},
year = {1983},
isbn = {0897910907},
publisher = {Association for Computing Machinery},
address = {New York, NY, USA},
url = {https://doi.org/10.1145/567067.567085},
doi = {10.1145/567067.567085},
booktitle = {Proceedings of the 10th ACM SIGACT-SIGPLAN Symposium on Principles of Programming Languages},
pages = {177–189},
numpages = {13},
location = {Austin, Texas},
series = {POPL '83}
}

@inproceedings{FullPartialPredicate-1995ISCA,
  title={A comparison of full and partial predicated execution support for ILP processors},
  author={Mahlke, Scott A and Hank, Richard E and McCormick, James E and August, David I and Hwu, Wen-Mei W},
  booktitle={Proceedings of the 22nd annual international symposium on Computer architecture},
  pages={138--150},
  year={1995}
}

@INPROCEEDINGS{PredicateAnalysis-1996MICRO,
  author={Johnson, R. and Schlansker, M.},
  booktitle={Proceedings of the 29th Annual IEEE/ACM International Symposium on Microarchitecture. MICRO 29}, 
  title={Analysis techniques for predicated code}, 
  year={1996},
  volume={},
  number={},
  pages={100-113},
  doi={10.1109/MICRO.1996.566454}}

@inproceedings{VLIWPredicate-1999ISCA,
  title={Value prediction in VLIW machines},
  author={Nakra, Tarun and Gupta, Rajiv and Soffa, Mary Lou},
  booktitle={Proceedings of the 26th annual international symposium on Computer architecture},
  pages={258--269},
  year={1999}
}

@article{AMDGPUPredicate-2011CompArchNews,
  title={Software-based branch predication for AMD GPUs},
  author={Taylor, Ryan and Li, Xiaoming},
  journal={ACM SIGARCH Computer Architecture News},
  volume={38},
  number={4},
  pages={66--72},
  year={2011},
  publisher={ACM New York, NY, USA}
}

@book{PredicatedExecution-1991Tech,
  title={On predicated execution},
  author={Park, Joseph CH and Schlansker, Mike},
  year={1991},
  publisher={Hewlett-Packard Laboratories Palo Alto, California}
}

@misc{NVIDIAGPUPredicate-2025online,
  title        = {Parallel Thread Execution ISA Version 9.0},
  author       = {NVIDIA},
  year         = 2025,
  url          = {https://docs.nvidia.com/cuda/parallel-thread-execution/},
  note         = {Accessed on Oct 25, 2025}
}

@INPROCEEDINGS{AdvancedPredication-2010FPT,
  author={Han, Kyuseung and Paek, Jong Kyung and Choi, Kiyoung},
  booktitle={2010 International Conference on Field-Programmable Technology}, 
  title={Acceleration of control flow on CGRA using advanced predicated execution}, 
  year={2010},
  volume={},
  number={},
  pages={429-432},
  doi={10.1109/FPT.2010.5681452}
}

@ARTICLE{TLIA-2016TPDS,
  author={Liu, Leibo and Wang, Junbin and Zhu, Jianfeng and Deng, Chenchen and Yin, Shouyi and Wei, Shaojun},
  journal={IEEE Transactions on Parallel and Distributed Systems}, 
  title={TLIA: Efficient Reconfigurable Architecture for Control-Intensive Kernels with Triggered-Long-Instructions}, 
  year={2016},
  volume={27},
  number={7},
  pages={2143-2154},
  doi={10.1109/TPDS.2015.2477841}}

@INPROCEEDINGS{TIA-2015ICCAD,
  author={Yin, Shouyi and Zhou, Pengcheng and Liu, Leibo and Wei, Shaojun},
  booktitle={2015 IEEE/ACM International Conference on Computer-Aided Design (ICCAD)}, 
  title={Acceleration of nested conditionals on CGRAs via trigger scheme}, 
  year={2015},
  volume={},
  number={},
  pages={597-604},
  doi={10.1109/ICCAD.2015.7372624}}

@INPROCEEDINGS{AURORA-2021DATE,
  author={Tan, Cheng and Xie, Chenhao and Li, Ang and Barker, Kevin J. and Tumeo, Antonino},
  booktitle={2021 Design, Automation \& Test in Europe Conference \& Exhibition (DATE)}, 
  title={AURORA: Automated Refinement of Coarse-Grained Reconfigurable Accelerators}, 
  year={2021},
  volume={},
  number={},
  pages={1388-1393},
  doi={10.23919/DATE51398.2021.9473955}}

@inproceedings{AcceleratorWall-HPCA2019,
  title={The accelerator wall: Limits of chip specialization},
  author={Fuchs, Adi and Wentzlaff, David},
  booktitle={2019 IEEE International Symposium on High Performance Computer Architecture (HPCA)},
  pages={1--14},
  year={2019},
  organization={IEEE}
}

@inproceedings{SSA-PPoPP1989,
  title={An efficient method of computing static single assignment form},
  author={Cytron, Ron and Ferrante, Jeanne and Rosen, Barry K and Wegman, Mark N and Zadeck, F Kenneth},
  booktitle={Proceedings of the 16th ACM SIGPLAN-SIGACT symposium on Principles of programming languages},
  pages={25--35},
  year={1989}
}

@misc{LLVMIfConversion-2026online,
  title        = {LLVM If-Conversion},
  author       = {LLVM},
  year         = 2026,
  url          = {https://llvm.org/doxygen/IfConversion_8cpp.html},
  note         = {Accessed on Mar 6, 2026}
}

@article{TRIPS-2004TACO,
author = {Sankaralingam, Karthikeyan and Nagarajan, Ramadass and Liu, Haiming and Kim, Changkyu and Huh, Jaehyuk and Ranganathan, Nitya and Burger, Doug and Keckler, Stephen W. and McDonald, Robert G. and Moore, Charles R.},
title = {TRIPS: A polymorphous architecture for exploiting ILP, TLP, and DLP},
year = {2004},
issue_date = {March 2004},
publisher = {Association for Computing Machinery},
address = {New York, NY, USA},
volume = {1},
number = {1},
issn = {1544-3566},
url = {https://doi.org/10.1145/980152.980156},
doi = {10.1145/980152.980156},
journal = {ACM Trans. Archit. Code Optim.},
month = mar,
pages = {62–93},
numpages = {32}
}

@inproceedings{Pipestitch-MICRO2023,
author = {Serafin, Nathan and Ghosh, Souradip and Desai, Harsh and Beckmann, Nathan and Lucia, Brandon},
title = {Pipestitch: An energy-minimal dataflow architecture with lightweight threads},
year = {2023},
isbn = {9798400703294},
publisher = {Association for Computing Machinery},
address = {New York, NY, USA},
url = {https://doi.org/10.1145/3613424.3614283},
doi = {10.1145/3613424.3614283},
booktitle = {Proceedings of the 56th Annual IEEE/ACM International Symposium on Microarchitecture},
pages = {1409–1422},
numpages = {14},
location = {Toronto, ON, Canada},
series = {MICRO '23}
}

@inproceedings{PICACHU-ASPLOS2025,
author = {Qin, Jiajun and Xia, Tianhua and Tan, Cheng and Zhang, Jeff and Zhang, Sai Qian},
title = {PICACHU: Plug-In CGRA Handling Upcoming Nonlinear Operations in LLMs},
year = {2025},
isbn = {9798400710797},
publisher = {Association for Computing Machinery},
address = {New York, NY, USA},
url = {https://doi.org/10.1145/3676641.3716013},
doi = {10.1145/3676641.3716013},
booktitle = {Proceedings of the 30th ACM International Conference on Architectural Support for Programming Languages and Operating Systems, Volume 2},
pages = {845–861},
numpages = {17},
location = {Rotterdam, Netherlands},
series = {ASPLOS '25}
}

@inbook{FexMo-MICRO2025,
author = {Yang, Yufei and Xie, Chenhao and Guo, Chuliang and Liu, Liansheng and Peng, Xiyuan and Liu, Datong and Peng, Yu},
title = {FexMo: Enabling Fuse Execution Mode for Multi-task CGRAs},
year = {2025},
isbn = {9798400715730},
publisher = {Association for Computing Machinery},
address = {New York, NY, USA},
url = {https://doi.org/10.1145/3725843.3756019},
booktitle = {Proceedings of the 58th IEEE/ACM International Symposium on Microarchitecture},
pages = {1236–1249},
numpages = {14}
}

@article{CHStone-IMT2009,
  title={Proposal and Quantitative Analysis of the CHStone Benchmark Program Suite for Practical C-based High-level Synthesis},
  author={Yuko Hara and Hiroyuki Tomiyama and Shinya Honda and Hiroaki Takada},
  journal={Information and Media Technologies},
  volume={4},
  number={4},
  pages={740-752},
  year={2009},
  doi={10.11185/imt.4.740}
}

@misc{PyTorch-Geomatric-axiv2019,
      title={Fast Graph Representation Learning with PyTorch Geometric}, 
      author={Matthias Fey and Jan Eric Lenssen},
      year={2019},
      eprint={1903.02428},
      archivePrefix={arXiv},
      primaryClass={cs.LG},
      url={https://arxiv.org/abs/1903.02428}, 
}
